\documentclass[12pt]{article} 

\usepackage[ngerman,english]{babel}
\useshorthands{"}
\addto\extrasenglish{\languageshorthands{ngerman}}
\usepackage[utf8]{inputenc}
\usepackage[T1]{fontenc}
\usepackage{amsmath}
\usepackage{graphicx}
\usepackage{amssymb}
\usepackage{hyperref}
\usepackage{amsmath}
\usepackage{mathrsfs}
\usepackage{color}
\usepackage{subfig}
\usepackage{amsthm}

\newtheorem{definition}{Definition}[section]

\newtheorem{lemma}{Lemma}[subsection]

\theoremstyle{definition}
\newtheorem{example}{Example}[section]

\renewcommand{\theta}{\vartheta}
\newcommand{\x}{\mathbf x}
\newcommand{\y}{\mathbf{b}}

\title{{\Large\bf Framework for the Quantum Mechanical Sum of Possibilities and Meaning for Field Theory and Gravity}}
\author{{\bf Artem Averin$^{\textrm{a}}$\footnote{artem.averin@campus.lmu.de}}}

\begin{document}

\maketitle

\centerline{\it $^{\textrm{a}}$ Arnold--Sommerfeld--Center for Theoretical Physics,}
\centerline{\it Ludwig--Maximilians--Universit\"at, 80333 M\"unchen, Germany}

\vskip1cm
\begin{abstract}
{{
In quantum mechanics, the measureable quantities of a given theory are predicted by performing a weighted sum over possibilities. We show how to arrange the possibilities into bundles such that the associated subsums can be viewed as well-defined theories on their own right. These bundles are submani\emph{folds} of \emph{possi}bilities which we call possifolds. We collect and prove some basic facts about possifolds. Especially, we show that possifolds are ensembles of what in a certain broadly defined sense that we explain can be regarded as soliton excitations (soliton-possifold correspondence). We provide an outlook on some applications. Among other things, we illustrate the use of the developed framework for the example of the Lieb-Liniger model. It describes non-relativistic bosons with an attractive interaction. We derive a dual theory describing the lowest-lying energy excitation modes. While the standard Bogoliubov-approximation breaks down at the critical point, our derived summation prescription stays regular. In the Bogoliubov-limit we observe the summation to possess an enhanced symmetry at this point while the summation cannot be ignored there. We finally provide a glimpse on the restrictions black hole physics implies in this context for the gravitational path integral.       
}}
\end{abstract}


\newpage

\setcounter{tocdepth}{2}
\tableofcontents
\break

\section{Introduction: General Idea}
\label{Kapitel 1}

It is the sum of possibilities. This, perhaps, summarizes most simply what contemporary fundamental physics is about. Let us explain this in an example. Suppose, we let a coin fall down and we are interested in observing it by measuring an observable $O$ a time $T$ after the coin's fall. How do we make a prediction for $O$?

In classical mechanics, the coin's trajectory is determined by Newton's law and allows for the computation of $O.$ This is schematically depicted in Fig. \ref{Fig. 1} (a).  

\begin{figure}[h!]
  \centering
  \subfloat[][]{\includegraphics[width=0.25\linewidth]{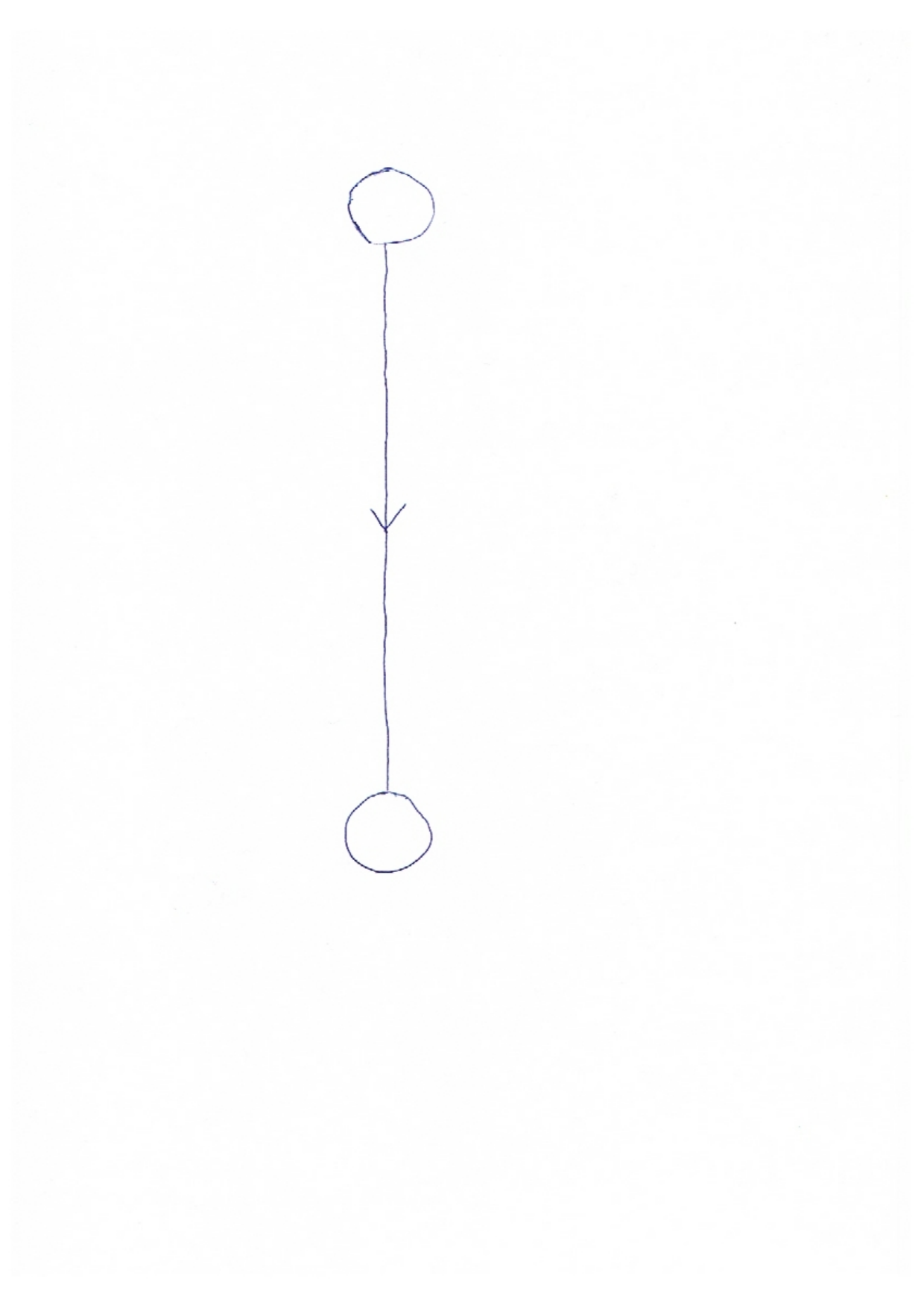}}%
  \qquad
  \subfloat[][]{\includegraphics[width=0.25\linewidth]{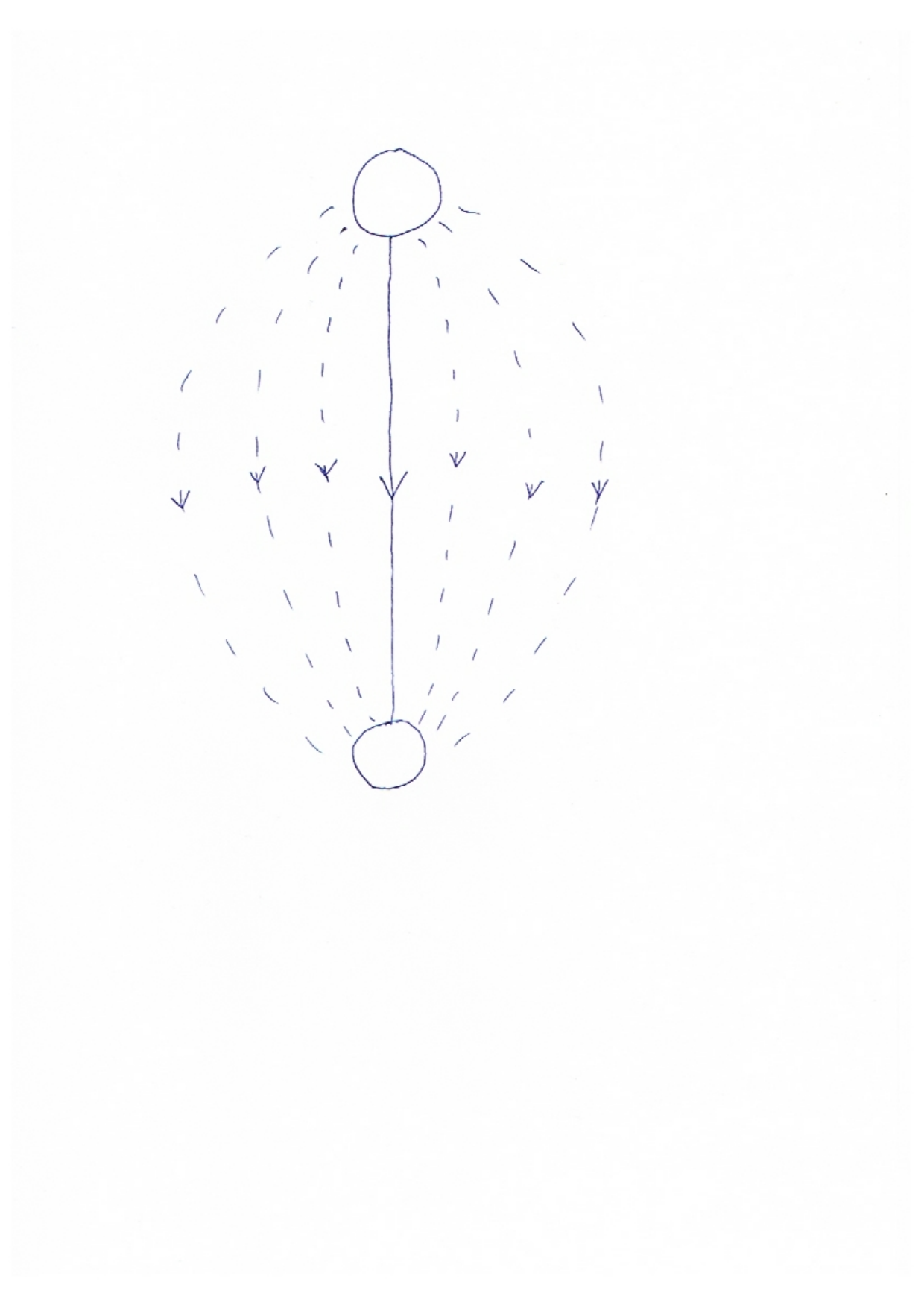}}%
	\qquad
  \subfloat[][]{\includegraphics[width=0.25\linewidth]{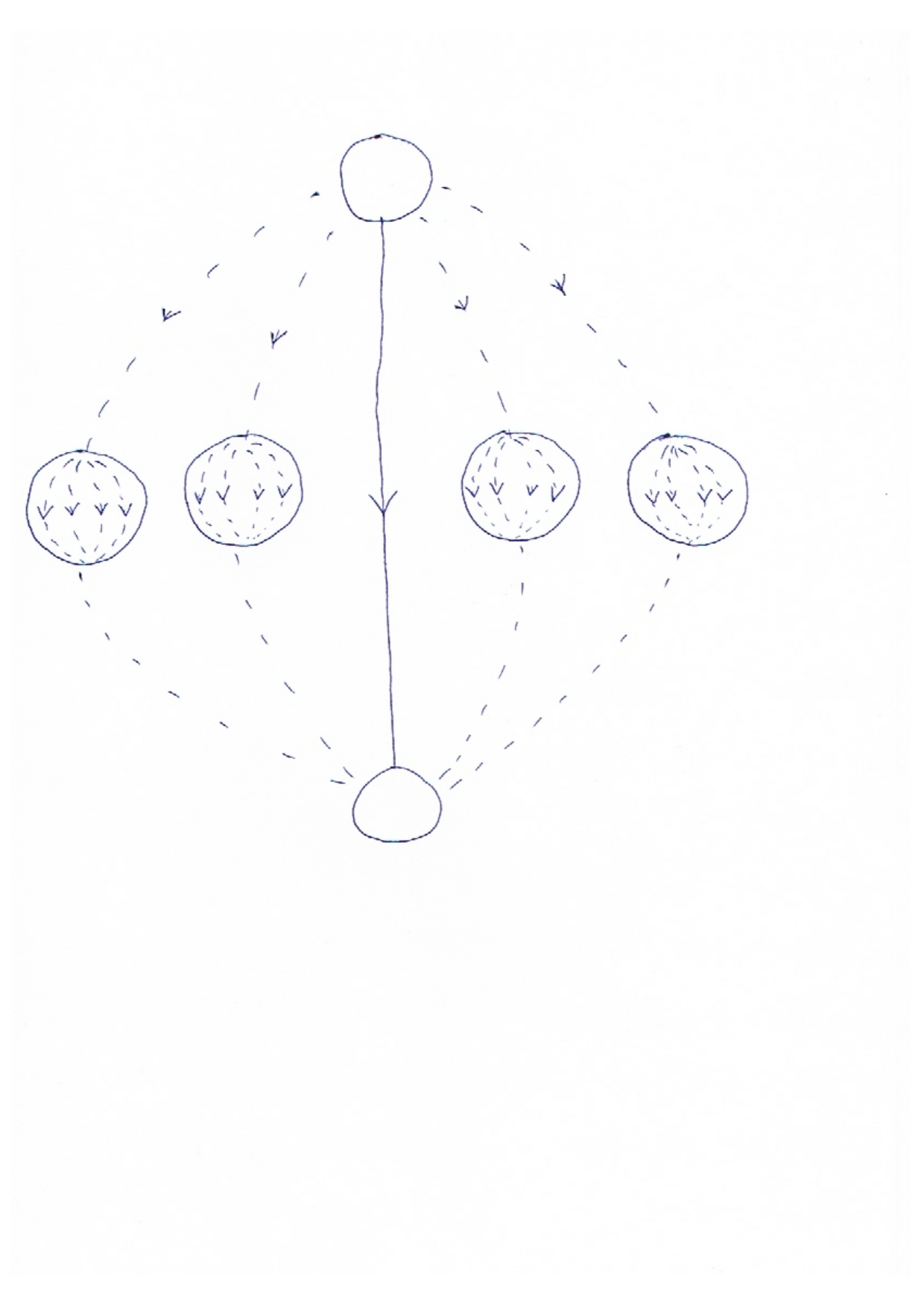}}%
  \caption{(a) Classically, equations of motion determine a unique trajectory. (b) Quantum mechanically, a sum over all possibilities has to be performed. Apart from the classical trajectory (solid line), every other path (dashed lines) contributes. (c) Various possibilities in the sum are packed together into bundles (possifolds). They are shown as the paths inside the circles.}%
	\label{Fig. 1}
\end{figure} 

In the present example, the predictions obtained this way are quite accurate. However, for over a century it is now known that for a proper description of Nature classical mechanics has to be replaced by quantum mechanics. In quantum mechanics, classical equations of motion are replaced by the Schrödinger equation. Instead of considering a single trajectory, the latter instructs us to form a weighted sum over all possible trajectories the coin could take. 

Schematically, this reads as

\begin{equation}
O = \sum_{\text{possibilities}}{(\text{weight factor})}
\label{1}
\end{equation}

and the situation is shown in Fig. \ref{Fig. 1} (b). 

Of course, in the given example, most of the trajectories in Fig. \ref{Fig. 1} (b) interfere destructively and the dominant contribution in \eqref{1} comes from the classical trajectory. However, for our argument this will be not important. What matters, is that the quantum corrections are there as a matter of principle. Although we could have chosen every other process, we will stay and quite often resort to this particular simple example for our explanations. In fact, what does it even mean to consider a concrete process?

To specify a concrete theory under consideration, we have to define what the ingredients of \eqref{1} are. Hence, what we will call a \emph{theory,} is a choice $(\text{possibilities}, \text{ weight factors}).$ We will explain the concrete mathematical meaning (and why it follows from the Schrödinger equation) in the next chapter. We use here only a schematic notation in order to bring across main ideas and the plan of our program. The weight factors associate a number to each possibility over which we have to be able to sum. A particular choice of such a theory is then via \eqref{1} related to its predicted observables. If two theories $(\text{possibilities}, \text{ weight factors})$ and $(\text{possibilities}', \text{ weight factors}')$ can be mapped onto each other such that all predicted observables are equal, we will speak of a \emph{duality}.

Obviously, the sum in \eqref{1} is commutative. That is, the order of summation over trajectories in Fig. \ref{Fig. 1} (b) can be chosen at will. This allows us to pack various possibilities into bundles in the summation \eqref{1}. This is visualized in Fig. \ref{Fig. 1} (c). 

In these kind of bundles, we are summing over a subset of the possibilities in the theory with appropriate weight factors. The idea is to consider such bundles which can comprise a theory (in the sense discussed above) on its own right. To specify such a bundle, we have to choose an appropriate subset of possibilities of the original theory. We have to insist that we are able to sum over this subset, i.e. we must be able to lable or coordinatize it. Hence, it must be a proper (sub)mani\emph{fold} of the \emph{possi}bilities. In analogy with word creations as orbifolds, orientifolds, Calabi-Yau $n$-folds and others, we will denote such objects as \emph{possifolds}. They will form our main topic here. 

To get a more rigorous understanding of the schematic ideas presented here, we remember that the basic ingredient in the heuristic definition of a possifold was the commutativity of the sum \eqref{1}. The commutativity of a discrete sum corresponds in the continuum limit to the covariance of an integral (i.e. its invariance under coordinate transformations). Our task is hence clear. Despite of giving a clear meaning to \eqref{1} as already mentioned, we have to rewrite the (functional) integral in \eqref{1} in a manifestly covariant way. This is precisely what we will do in the next chapter. Motivated by the discussion in this chapter, this will then allow us to give a rigorous definition and discussion of the notion of a possifold.

Before going into this task, let us pause for a moment and explore why possifolds should be discussed. What is their physical meaning?

As described, the idea behind possifolds is to reduce a given theory to a smaller and simpler one. Hereby, an allowed reduction corresponds to a possifold. For a given theory its possifolds thus provide ``lines of cuts'' where reductions can take place. They thus provide an orientation tool to navigate within a given theory. Such a navigation should be invariant under dualities (in the sense dualities were discussed above). 

Dualities are prominently present in string theory. There, the tool used to find and navigate through the various dualities among the different string theories are $D$-branes. The latter are solitonic configurations with certain characteristics (like their tension, excitation spectrum and its supersymmetry). The standard argument for uncovering a duality among two string theories proceeds by comparing the characteristics of their $D$-branes respectively. From our discussion so far, we might therefore expect in general a relation between the possifolds of a given theory and its solitonic configurations. Indeed, we will find the \emph{soliton-possifold correspondence} stating that a possifold is in one-to-one correspondence with an ensemble of excitations of a soliton in a given theory. 

Conversely, this means that a proper description of a solitonic excitation spectrum is restricted to form a possifold. Only in that case, the soliton excitations can be described by what can be called a dual theory. Hence, we learn that the possifold notion provides a systematic tool for the study of solitons and their excitations. In fact, the possifold notion was encountered in this context in \cite{Averin:2018owq,Averin:2019zsi} in the analysis of black holes. However, here, we will not go into this analysis. Rather, our purpose here is to develop the possifold notion from first principles and justify basic facts. 

So far, we have presented the possifold notion merely as a dissection tool for a given theory $(\text{possibilities}, \text{ weight factors})$ in order to explore its possibilities in a systematic way. However, as explained, quantum mechanics is the sum of possibilities. Can possifolds also teach us something about this sum? 

In physics, it happens quite often that a certain quantity of interest is given explicitly by a perturbation series. Under circumstances it may, however, happen that the various orders of this series become comparable causing a breakdown of the perturbative expansion. Quite often, this breakdown can be traced back to an important physical effect. Taking into account this effect properly in the perturbative expansion allows in some cases a resummation into a new more useful perturbative expansion. For instance, external soft gluon legs in hadronic processes are resummed by the introduction of the parton distribution functions. Loop diagrams in effective vertices leading to large logarithms are resummed by taking into account the renormalization group flow of couplings. We see, the maxim ``to make perturbation theory work again'' is well-known in physics. In chapter \ref{Kapitel 4.3}, we would like to ask whether possifolds can be added as a further tool to follow this maxim. Since possifolds work as anchor points in the sum of possibilities in a given theory, the sensitivity of this sum on a particular set of possibilities is expected to be controlled by the properties of near lying possifolds. This may explain the presence or absence of particular quantum mechanical sensitivities which are not visible in usual perturbative expansions. We will give a glimpse of this line of thought in chapter \ref{Kapitel 4.3}.

To summarize, we have provided an intuitive motivation for what a possifold should be and why this notion should be studied. As is quite common in quantum field theory, while the basic concepts are very descriptive, their rigorous formulation is rather technical. The purpose of this introduction was to provide a vivid description of the program and technicalities that are going to come.

The paper is organized as follows. In the next chapter \ref{Kapitel 2}, we will give as promised a rigorous meaning and manifestly covariant formulation of the sum of possibilities \eqref{1}. In chapter \ref{Kapitel 3} this will motivate the possifold notion. Here, we also summarize and prove some basic facts concerning this notion. In chapter \ref{Kapitel 4} we will give some applications to justify the utility of the developed framework. In particular, we will give some examples of the use of the involved formalism for the case of the Lieb-Liniger model in chapter \ref{Kapitel 4.1}. In chapter \ref{Kapitel 4.2} we establish the mentioned soliton-possifold correspondence. Its meaning for the problem of quantum mechanical sensitivity is discussed in chapter \ref{Kapitel 4.3}.

\section{The Sum of Possibilities}
\label{Kapitel 2}

\subsection{Manifestly Covariant Prescription}
\label{Kapitel 2.1}

Consider a system with generalized coordinates $q=(q_1,\ldots,q_N)$ and momenta $p=(p_1,\ldots,p_N).$ We denote the phase space by $\Gamma = \{ (q,p) \}.$ The classical observables are (smooth) maps $O: \Gamma \longrightarrow \mathbb{R}$ and are associated upon quantization to quantum mechanical operators $O: \mathcal{H} \longrightarrow \mathcal{H}$ on the Hilbert-space $\mathcal{H}.$ For a given Hamiltonian $H=H(q,p)$ the basic measurable quantities in the quantum theory are given by the correlation functions of quantum mechanical operators. For observables $O_i : \Gamma \longrightarrow \mathbb{R}$ and times $t_i$ for $i=1,\ldots,n$ they can be predicted by the formula

\begin{equation} \label{2}
\begin{split}
&\langle \Omega | TO_1^H (t_1) \cdots O_n^H(t_n) |\Omega \rangle \\
&=\text{\begin{tiny}$\lim_{T \to \infty(1-i\varepsilon)} \frac{\int \mathcal{D}q(t) \mathcal{D}p(t) O_1(q(t_1),p(t_1)) \cdots O_n(q(t_n),p(t_n)) 
e^{\frac{i}{\hbar} \int_{-T}^T dt \left(\sum_i p^i(t) \dot{q}^i(t) -H(q(t),p(t)) \right)}}
{\int \mathcal{D}q(t) \mathcal{D}p(t)  
e^{\frac{i}{\hbar} \int_{-T}^T dt \left(\sum_i p^i(t) \dot{q}^i(t) -H(q(t),p(t)) \right)}}.$\end{tiny}}
\end{split}
\end{equation}

This is the rigorous notion of \eqref{1} as the observable quantity is expressed as a weighted sum over paths in $\Gamma.$ We will often abbreviate the left-hand side as $\langle O_1 (t_1) \cdots O_n(t_n) \rangle.$ Standard textbooks \cite{Peskin:1995ev} cover a version of \eqref{2} with the restriction of only position-dependent operators $O_i$ and the action appearing in \eqref{2} instead of the Hamiltonian. For us, the general case will be crucial. A proof of the general formula \eqref{2} together with an explanation of ingredients is gathered in Appendix \ref{Appendix A}. 

As can be seen from this derivation, \eqref{2} is equivalent to the Schrödinger equation. It is the incarnation of \eqref{1} as the sum of possibilities. Then, following our task mentioned in chapter \ref{Kapitel 1}, we have to rewrite \eqref{2} in a manifestly covariant way. 

The prescription \eqref{2} refers explicitly to the canonical coordinates $q$ and $p.$ To write it in a coordinate-invariant way, we introduce the $1$-form $\Theta = \sum_{i=1}^N p_i \delta q_i$ on $\Gamma.$ The exterior derivative on $\Gamma$ is in the following denoted by $\delta.$ From now on, we reserve the symbol $d$ for exterior derivatives on spacetime. It then follows that the exterior derivative $\Omega = \delta \Theta$ is the symplectic form on the phase space $\Gamma.$ Written in general coordinates $\phi^A$ for $A=1,\ldots,2N$ on $\Gamma,$ the components $\Omega_{AB}$ of $\Omega = \frac{1}{2}\Omega_{AB} \delta \phi^A \wedge \delta \phi^B$ are non-degenerate with the inverse denoted, as usual, by $\Omega^{AB}.$ For observables $f,g : \Gamma \longrightarrow \mathbb{R}$ the Poisson-bracket is defined by $\{f,g\} = \Omega^{AB} \partial_A f \partial_B g.$ 

The measure $\mathcal{D}q(t) \mathcal{D}p(t)$ on each time-slice $t$ in \eqref{2} defines a measure on $\Gamma.$ This measure can be interpreted as a volume form $\text{Vol}$ on $\Gamma.$ In the functional integration \eqref{2} the various time-slices are then connected by exterior products. With the definition of $\mathcal{D}q(t) \mathcal{D}p(t)$ as in Appendix \ref{Appendix A} the volume form can be written in a coordinate-invariant way as

\begin{equation} \label{3}
\text{Vol} := \bigwedge_{i=1}^N \frac{\delta p_i \wedge \delta q_i}{2\pi \hbar} = \frac{1}{N!} \frac{1}{(2\pi \hbar)^N} \underbrace{\Omega \wedge \ldots \wedge \Omega}_{N\text{-times}}.
\end{equation}

We then can write \eqref{2} manifestly covariant as

\begin{equation} \label{4}
\begin{split}
&\langle \Omega | TO_1^H (t_1) \cdots O_n^H(t_n) |\Omega \rangle \\
&=\lim_{T \to \infty(1-i\varepsilon)} \frac{\int \text{Vol}(t) O_1(\phi(t_1)) \cdots O_n(\phi(t_n)) 
e^{\frac{i}{\hbar} \int_{-T}^T dt \left(\Theta_{\phi(t)}[\dot \phi(t)] -H[\phi(t)] \right)}}
{\int \text{Vol}(t) 
e^{\frac{i}{\hbar} \int_{-T}^T dt \left(\Theta_{\phi(t)}[\dot \phi(t)] -H[\phi(t)] \right)}}
\end{split}
\end{equation}

where functional integration is over all paths $\phi : \mathbb{R} \longrightarrow \Gamma$ in $\Gamma.$ The coorelation functions in \eqref{4} are determined by the generating functional associated to the partition function compactly written as

\begin{equation} \label{5}
Z= \lim_{T \to \infty(1-i\varepsilon)} \int \text{Vol}(t) 
e^{\frac{i}{\hbar} \int_{-T}^T dt \left(\Theta_{\phi(t)}[\dot \phi(t)] -H[\phi(t)] \right)}.
\end{equation}

The prescription \eqref{4} or \eqref{5} is the main result of this chapter. We learn that specification of a theory requires choosing the triple $(\Gamma,\Theta,H).$ Hereby, $\Gamma$ is a manifold, $\Theta$ a $1$-form on $\Gamma$ such that $\Omega =\delta \Theta$ is non-degenerate at each point. $H:\Gamma \longrightarrow \mathbb{R}$ is an observable playing the role of the Hamiltonian. From now on, we will call such a triple $(\Gamma,\Theta,H)$ a \emph{theory}. 

With this definition, $(\Gamma,\Omega)$ is always a symplectic manifold. By Darboux's theorem, there are at least locally always canonical coordinates such that $\Theta=\sum_{i=1}^N p_i \delta q_i.$ In canonical coordinates, \eqref{4} reduces to \eqref{2}.

With \eqref{4} at hand, we could now proceed the materialization of the ideas described in the last chapter and discuss the possifold notion. Before doing that, however, we want to connect our notion of a theory with the conventional ones. As a by-product, we rederive what is known as the covariant phase space formalism. Originally intended to provide a notion of conserved charges in diffeomorphism invariant theories, we here see its origin directly from the Schrödinger equation in the form \eqref{4}. There are many reviews on this subject (e.g. \cite{Seraj:2016cym} contains a quick introduction with further references) although we have tried to be self-contained. 

In \eqref{4}, a sum over all paths in the manifold $\Gamma$ is taken (see Fig. \ref{Fig. 2} (a)).

\begin{figure}[h!]
  \centering
  \subfloat[][]{\includegraphics[width=0.4\linewidth]{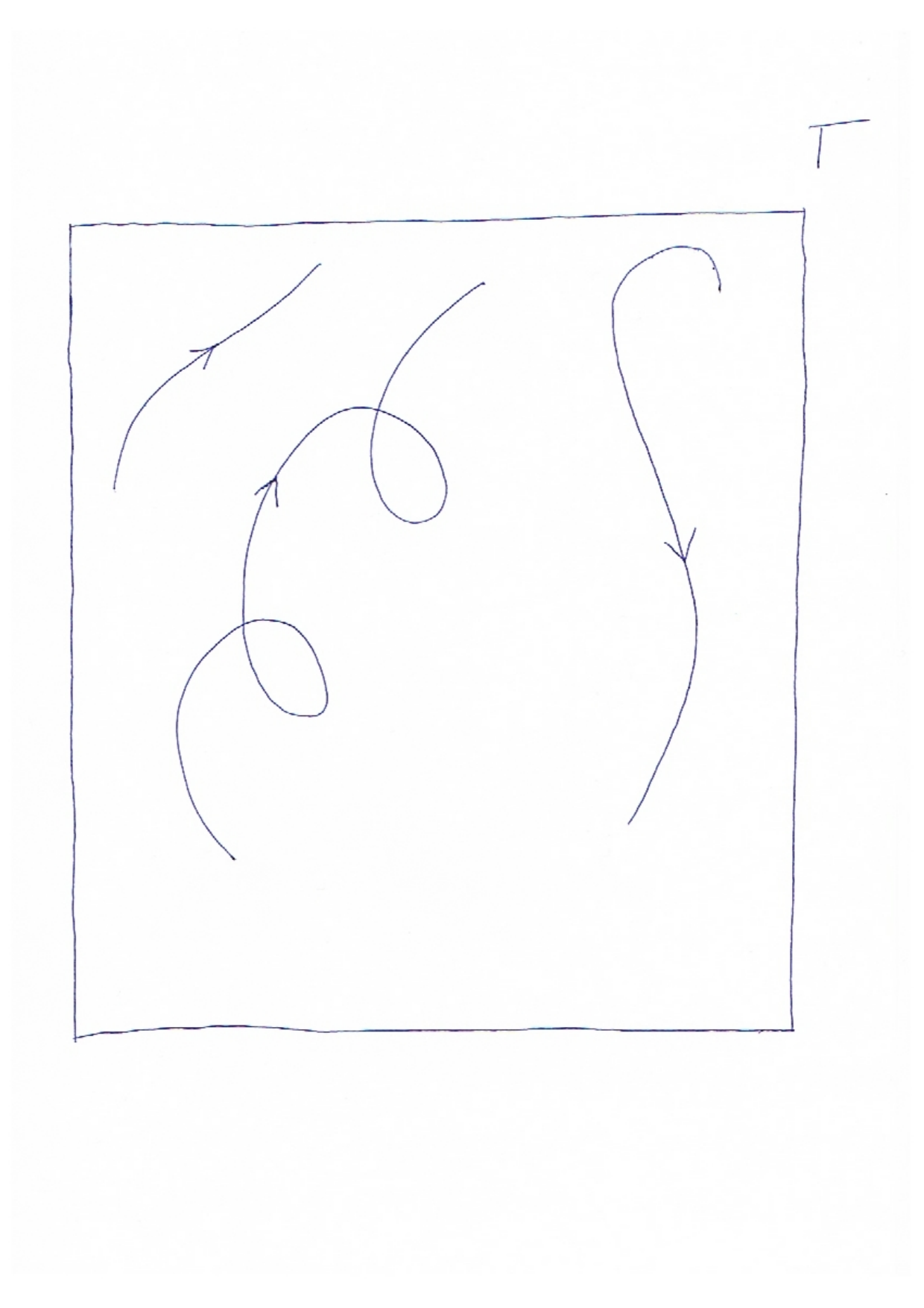}}%
  \qquad
  \subfloat[][]{\includegraphics[width=0.4\linewidth]{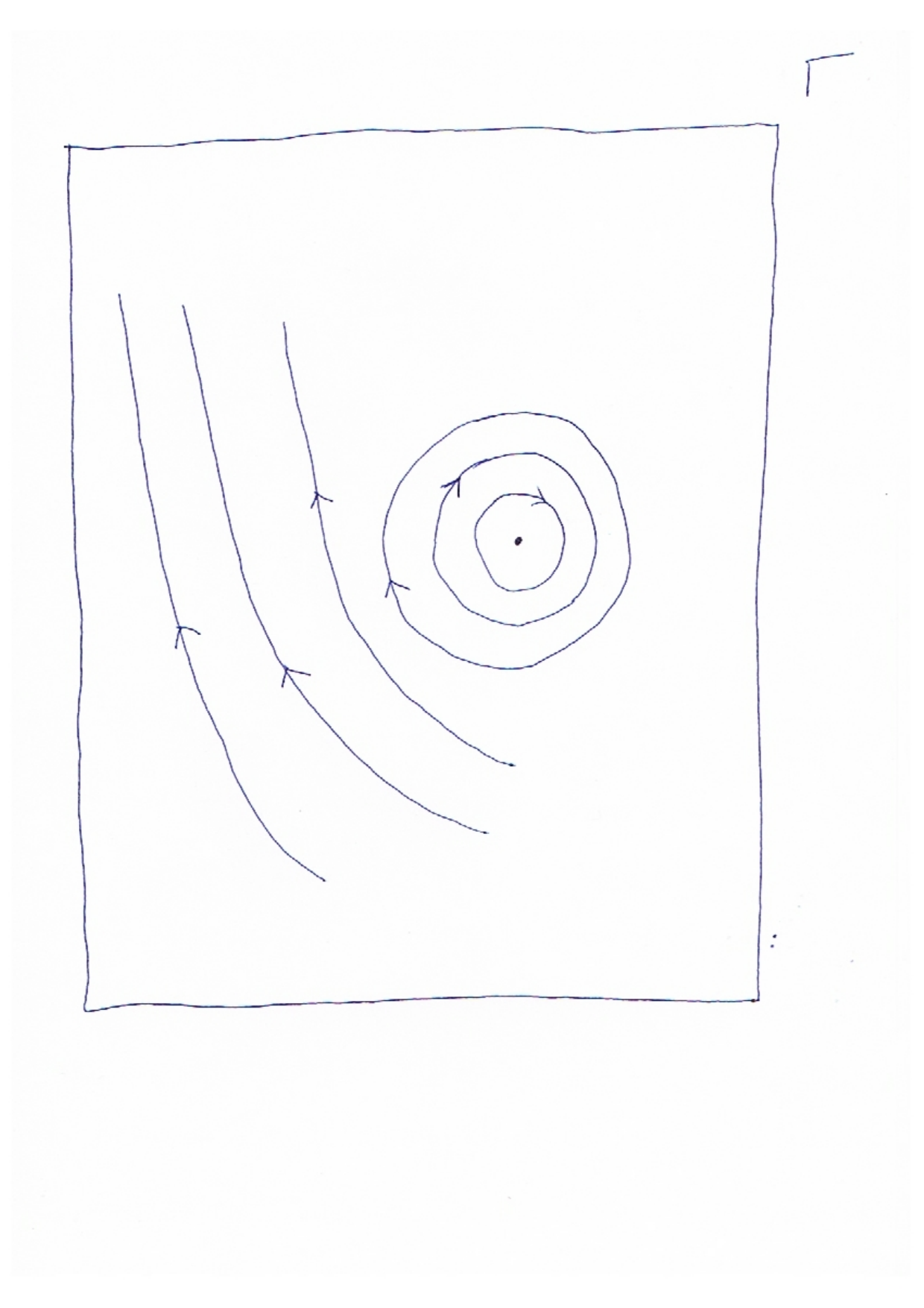}}%
  \caption{(a) Quantum mechanics insists in taking a sum over all paths in $\Gamma.$ (b) The dominant paths in the $\hbar \to 0$ limit form the phase portrait.}%
	\label{Fig. 2}
\end{figure} 

At first, it might be striking that the weight of each path contains the potential $\Theta$ of the symplectic form rather than the symplectic form itself. However, it is easily checked with the methods of Appendix \ref{Appendix A} that a change $\Theta \to \Theta + \delta a$ for an arbitrary scalar $a: \Gamma \longrightarrow \mathbb{R}$ does not affect \eqref{4} as long as the operator insertions are at finite times $t_i.$ The reason is that $a$ enters in the nominator and denominator only at the times $\pm T$ which are sent to infinity where they cancel.\footnote{Sometimes, the operator insertion times are taken to infinity. This is, for instance, the case in the derivation of the LSZ-formula. This is not problematic, however, as the limits are taken after the limit $T \to \infty$ in \eqref{4}.}

Classical mechanics is obtained as the $\hbar \to 0$ limit in \eqref{4}. The variation of the phase factor under changing the path is obtained as 

\begin{equation} \label{6}
\begin{split}
&\delta_{\phi(t)} \int dt \left( \Theta_{\phi(t)}[\dot \phi(t)] - H[\phi(t)] \right) \\
&= \int dt \left( \Omega_{AB} (\phi(t)) \dot{\phi}^B(t) - \partial_A H (\phi(t)) \right) \delta \phi^A(t) \\
&+ \int dt \frac{d}{dt} \left( \Theta_{\Phi(t)}(\delta \phi(t)) \right).
\end{split}
\end{equation}

By the same argument as in the previous paragraph, the total derivative term in \eqref{6} cancels in the nominator and denominator in \eqref{4}. The dominant paths in \eqref{4} are then found by stationary phase approximation and fulfill the Hamiltonian equations of motion

\begin{equation} \label{7}
\Omega_{AB} (\phi(t)) \dot{\phi}^B(t) = \partial_A H (\phi(t)).
\end{equation}

Hence, we see how the variational principle of classical mechanics is a consequence of \eqref{4} in the $\hbar \to 0$ limit. The dominant paths in $\Gamma$ form the phase portrait in the Hamiltonian phase space (see Fig. \ref{Fig. 2} (b)). 

The Hamilton equations of motion \eqref{7} state that the Hamiltonian is the generator of time evolution. More generally, we define an observable $H_X : \Gamma \longrightarrow \mathbb{R}$ to be a generator of the vectorfield $X$ on $\Gamma$ if $\partial_A H_X = X^B \Omega_{BA}.$ A vectorfield $X$ on $\Gamma$ is called symplectic symmetry if $\mathcal{L}_X \Omega =0.$ Because of $0=\mathcal{L}_X \Omega = \delta(X \cdot \Omega),$ there is (at least locally) a generator $H_X$ of $X.$ For $\phi \in \Gamma,$ the value $H_X[\phi]$ is called its charge with respect to $X.$ 

Note that there is a one-to-one correspondence between the observables (up to a constant) and the symplectic symmetries they generate. While this holds trivially for all maps $\Gamma \longrightarrow \mathbb{R}$ for finite-dimensional phase spaces, this restriction is called Regge-Teitelboim differentiability in the infinite-dimensional case. A map $\Gamma \longrightarrow \mathbb{R}$ is only an observable if it is the generator of a symplectic symmetry. 

One shows easily that the symplectic symmetries form a Lie-algebra with respect to the Lie-bracket on $\Gamma.$ With our sign convention, the associated generators form a representation of this algebra up to central extensions with respect to the Poisson-bracket. That is, $\{ H_X, H_Y \} = H_{[X,Y]} + K_{X,Y}$ for symplectic symmetries, their associated generators and constant $K_{X,Y} \in \mathbb{R}.$ Quantum mechanically, the operator algebra is dictated by \eqref{4}.

\subsection{Partition Functions, Action Integrals, Boundary Terms}
\label{Kapitel 2.2}

We have so far carefully described how a choice $(\Gamma,\Theta,H)$ specifies a theory and its measurable quantities. In the usual approach, a theory is given by an action. How does this relate to our notion?

We can write \eqref{5} in canonical coordinates and integrate over the momenta to yield an expression of the form

\begin{equation} \label{8}
\begin{split}
Z &= \int_{\mathcal{F}} \mathcal{D}q(t) R[q(t)] e^{\frac{i}{\hbar} S[q(t)]} \\
&= \int_{\mathcal{F}} \widetilde{\mathcal{D}q(t)} e^{\frac{i}{\hbar} S[q(t)]}
\end{split}
\end{equation}

with real functionals $R$ and $S$ (the former can be absorbed in the measure). This is the standard form for the partition function depending on the choice of an action $S[q(t)]$ (together with a measure and the configuration space $\mathcal{F} = \{q(t)\}$). 

In the classical $\hbar \to 0$ limit, the integration over momenta $p(t)$ can be estimated by stationary phase approximation. Following \eqref{6} in canonical coordinates, $p(t)$ is determined by the relation among $p(t)$ and $q(t)$ as well as $\dot q(t)$ implied by the equations \eqref{7}. From \eqref{5}, we then further obtain for $\hbar \to 0$ 

\begin{equation} \label{9}
S[q(t)] \xrightarrow[\hbar \to 0]{} \int dt \left( \sum_i p_i(q,\dot q) \dot{q}_i - H(q,p(q,\dot q)) \right),
\end{equation}

that is, we have the standard relation between the classical action and the Hamiltonian formulation (given by $(\Gamma,\Theta,H)$).

The space of functional integration in \eqref{5} is the phase space $\Gamma.$ It consists of the solutions of the Hamiltonian equations \eqref{7}. Equivalently, it is parametrized by those field configurations in $\mathcal{F}$ satisfying the equations of motion implied by the classical action \eqref{9}. In short, we have

\begin{equation} \label{10}
\Gamma \cong \bar{\mathcal{F}} := \{q(t) | q(t) \text{ satisfies equations of motion associated to \eqref{9}} \}.
\end{equation}

We want to briefly discuss the proper transition between the Lagrangian and Hamiltonian formulation as some points of this will become important later in the discussion. We consider the situation depicted in Fig. \ref{Fig. 3}.  

\begin{figure}[h!]
\centering
  \includegraphics[trim = 0mm 100mm 20mm 5mm, clip, width=0.7\linewidth]{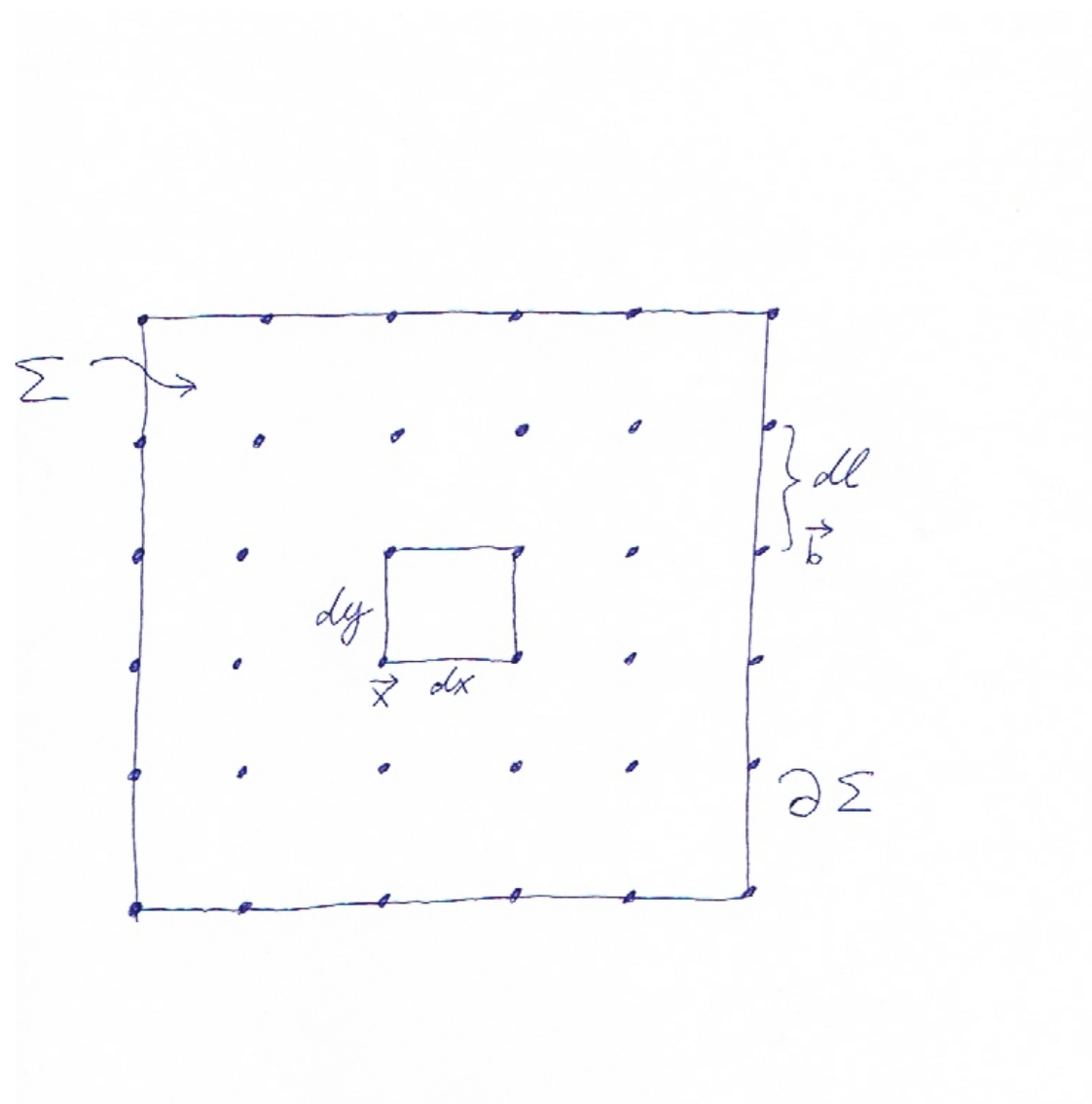}
  \caption{A system of generalized coordinates forming a hypersurface $\Sigma$ with boundary $\partial \Sigma.$}
	\label{Fig. 3}
\end{figure}

We have a system with different generalized coordinates labeled by the sites in Fig. \ref{Fig. 3}. Sites in the interior of the lattice are labeled by their respective coordinates $\x$ and are associated the coordinate $q_{\x}.$ The same holds analogously for sites $\y$ at the boundary. The system will have an action

\begin{equation} \label{11}
S=\int dt \sum_{\x} dx dy \mathcal{L}_{\x} (q_{\x},\dot{q}_{\x},\ldots) + \int dt \sum_{\y} dl \mathcal{L}_{\y} (q_{\y},\dot{q}_{\y},\ldots).
\end{equation} 

The dots in \eqref{11} describe typically nearest-neighbour interactions among the coordinates at the lattice sites although this will be not important in what follows.

In the continuum limit $dx, dy, dl \to 0,$ we can interpret the lattice as a surface $\Sigma$ forming the foliation along the evolution with $t$ (which we call ``time'') of the manifold $M = \Sigma \times \mathbb{R}$ (which we call ``spacetime''). The generalized coordinates are labeled by points in the hypersurface $\Sigma$ (with boundary $\partial \Sigma$) and evolve in time $t.$ The continuum version of \eqref{11} is

\begin{equation} \label{12}
S= \int_M L + \oint_{\partial M} B
\end{equation}

with some Lagrange-form $L$ on $M$ and an appropriate form $B$ on $\partial M.$ 

The canonical $1$-form for \eqref{11} is then as usual

\begin{equation} \label{13}
\begin{split}
\Theta &= \sum_i p_i \delta q_i \\
&= \sum_\x dx dy \frac{\partial \mathcal{L}_\x}{\partial \dot{q}_\x} \delta q_\x + \sum_\y dl \frac{\partial \mathcal{L}_\y}{\partial \dot{q}_\y} \delta q_\y.
\end{split}
\end{equation}

Practically, the interaction in the interior of $\Sigma$ is known, i.e. the Lagrange-form $L$ in \eqref{12} is known. Suppose it to be of the form $L = \varepsilon \mathcal{L}(\phi,\partial_\mu \phi),$ where we have collected $\phi(t,\x):=q_\x(t)$ the coordinates into a field. $\mathcal{L}$ is the Lagrangian density and $\varepsilon$ the Levi-Civita tensor. For the variation, we obtain

\begin{equation} \label{14}
\begin{split}
\delta L &= \varepsilon \left( \frac{\partial \mathcal{L}}{\partial \phi} -\partial_{\mu} \left( \frac{\partial \mathcal{L}}{\partial(\partial_\mu \phi)} \right) \right) \delta \phi \\
&+ \varepsilon \partial_\mu \left( \frac{\partial \mathcal{L}}{\partial(\partial_\mu \phi)} \delta \phi \right).
\end{split}
\end{equation}

With respect to the metric used to raise and lower the indexes, the second term is a total derivative $d \theta,$ where $\theta = \ast(j_\mu dx^\mu)$ with $j^\mu = \frac{\partial \mathcal{L}}{\partial(\partial_\mu \phi)} \delta \phi.$ Furthermore, we see that the interior part of \eqref{13} corresponds to 

\begin{equation} \label{15}         
\sum_\x dx dy \frac{\partial \mathcal{L}_\x}{\partial \dot{q}_\x} \delta q_\x = \int_\Sigma \theta.
\end{equation}

These observations are summarized in the basic formulas of the covariant phase space formalism. For general fields collectively denoted by $\phi$ on a $d$-dimensional spacetime $M,$ the variation of a Lagrangian-form $L[\phi]$ of degree $d$ is of the form $\delta L = -E[\phi] \delta \phi + d \theta.$ Hereby, $E[\phi]$ is a $d$-form vanishing if $\phi$ satisfies the equations of motion. The form $\theta=\theta_\phi [\delta \phi]$ of degree $d-1$ obviously can be shifted as $\theta \to \theta + dk$ for a form $k = k_\phi [\delta \phi]$ of degree $d-2.$ The form $\theta$ is called the presymplectic potential. Although often ignored in the literature, the shift by $k$ is not ambiguous. As in \eqref{15}, the canonical $1$-form is given by

\begin{equation} \label{16}
\Theta = \int_\Sigma \theta
\end{equation}

where $\Sigma$ is a hypersurface allowing a foliation of $M.$ A shift by $k$ contributes a boundary integral over $\partial \Sigma.$ Hence, we learn that the form $k$ controls the second term in \eqref{13} coming from the boundary term in the action \eqref{12}. Generically, a change in $k,$ therefore, means a change in the boundary term of \eqref{12} and hence corresponds to a different theory. This is the (often supressed) meaning of $k.$

Only if the Hamiltonian generating the time evolution along the $\Sigma$-foliation is generated with \eqref{16} guarants, as explained, the well-posedness of the initial-value problem as well as the variational principle. The choice of $k,$ explicitly entering \eqref{16}, reflects the boundary term of the action governing the degrees of freedom $q_\y (t)$ at the boundary $\partial \Sigma.$

In \eqref{16}, only the value of $k$ at $\partial \Sigma$ enters. This seems to suggest that only the boundary value of $k$ is meaningful. Strictly speaking, this is correct. However, we will give a meaning to $k$ in the interior in the context of the possifold flow in the next chapter. 

\subsection{Redundancies, Degeneracies, Faddeev-Popov Ghosts}
\label{Kapitel 2.3}

We have now discussed how to translate a theory given by an action into a formulation $(\Gamma,\Theta,H)$ presented here. While the latter will be central for our discussion, the former is the usual way a theory is presented. The reason is that to find a suited theory to describe a physical phenomenon, the guiding principles are the symmetries of this phenomenon. Within a Lagrangian formulation, those symmetries are typically manifest. This restricts the way the action of a searched theory can be like. 

This point, however, is well-known to be connected with some caveats which will be important for us. Consider, for instance, the case of electrodynamics. Its action principle in terms of locally interacting fields on spacetime with manifest Lorentz-invariance contains redundant degrees of freedom. The example means that for an action $S[q(t)]$ it may happen that not all generalized coordinates $q$ are physical. Rather, some of them may correspond to redundancies in the description. As explained, these redundancies typically appear to make particular symmetries of the action manifest in terms of locally interacting spacetime fields.

These redundancies need to be excluded from the sum \eqref{8}. Practically, this happens through the introduction of additional locally interacting spacetime fields. The role of these Faddeev-Popov ghosts is to cancel the redundancies at each order of perturbation theory following from \eqref{8}. While this standard procedure has been proven very useful in practical computations as it allows the use of conventional Feynman diagrams, the high amount of present redundancy is evident.

As shown, any system following the Schrödinger equation must possess a formulation in the form of \eqref{5}. \eqref{5} is free of redundancies and we want to ask how the transition to the formulation \eqref{5} is getting rid of the redundancies present in the action. 

In order to see this, suppose a system to be described by the action $S=\int dt L(q, \dot q)$ with a Lagrange-function $L=L(q,\dot q)$ and generalized coordinates $q=(q_1,\ldots,q_N).$ Using the relation $p_i = \frac{\partial L}{\partial \dot{q}_i}$ among canonical momenta and velocities, we can express the symplectic form on the phase space in terms of coordinates $q$ and velocities $\dot q$ as

\begin{equation} \label{17}
\begin{split}
\Omega &= \delta \Theta = \delta \left( \sum_i p_i \delta q_i \right) \\
&= \sum_{i,j} \frac{\partial^2 L}{\partial \dot{q}_i \partial q_j} \delta q_j \wedge \delta q_i 
+\sum_{i,j} \frac{\partial^2 L}{\partial \dot{q}_i \partial \dot{q}_j} \delta \dot{q}_j \wedge \delta q_i.
\end{split}
\end{equation}

This implies that at any point $\Omega$ is non-degenerate if and only if the Hesse-matrix $\left( \frac{\partial^2 L}{\partial \dot{q}_i \partial \dot{q}_j} \right)_{i,j=1,\ldots,N}$ is non-degenerate. According to the implicit function theorem, the degeneracy of this Hesse-matrix means that the relation $p_i=\frac{\partial L}{\partial \dot{q}_i}$ is not solvable for the velocities $\dot q.$ This signals the presence of gauge constraints and redundancies. Hence, by dividing out the degenerate directions of $\Omega$ at any point, one is getting rid of the redundancies directly.

In total, the transition from the action $S[q(t)]$ to the formulation $(\Gamma,\Theta,H)$ proceeds as described where $\Gamma$ is determined by \eqref{10} with the addendum that the right-hand-side has to be divided out by the zero-modes of $\Omega=\delta \Theta$ given by \eqref{16}.

In this chapter, we have given with \eqref{4} and \eqref{5} an explicit realization of the sum of possibilities making covariance manifest. We have addressed how this formulation of a theory relates to the conventional one using the action. Hereby, we have rederived the covariant phase space formalism and especially emphasized the appearance of boundary terms often not properly treated in the literature. The next question is clear. How can we evaluate \eqref{4} for a given theory? Or what are at least tools to simplify this task? This leads directly to the possifold notion.

\section{Possifolds}
\label{Kapitel 3}

In this chapter, let $(\Gamma,\Theta,H)$ denote a given theory in the sense of chapter \ref{Kapitel 2}. The measurable quantities are then given by \eqref{4}. Unfortunately, performing the sum \eqref{4} is in general very difficult. Already for finitely many degrees of freedom and especially in quantum field theory one needs to resort to suited approximations (e.g. perturbation theory). As mentioned in chapter \ref{Kapitel 1}, we want to discuss here a further such approximation method. The idea is to divide the sum for the theory $(\Gamma,\Theta,H)$ into suited subsums. 

The division into these subsums is natural and of physical meaning. To understand this, suppose we are placing an oscillator on a table and we are interested in performing measurements. We are typically not interested in all possible measurable quantities. It may be sufficient for us to measure the oscillator's position and velocity while we are not interested in the electromagnetic field in the vicinity of the oscillator or for the gravitational field at spatial infinity (although both may affect the oscillator). 

Concretely, this means that in \eqref{4}, we are typically interested in observables $O : \Gamma \longrightarrow \mathbb{R}$ that only partially depend on the coordinates of $\Gamma.$ Suppose, $\Gamma = S \times T$ for manifolds $S$ and $T$ and we consider observables $O : \Gamma \to \mathbb{R}$ that are independent of the coordinates on $T.$ On each time-slice in \eqref{4}, we then would like to split the integration over $\Gamma$ into a product integration over the coordinates on $S$ and $T,$ respectively. Since the operator insertions are independent of the $T$-coordinates, we would like to integrate them out. The hope is that the resulting summation over paths in $S$ is again of the form \eqref{4}. This leads to the following definition:

\begin{definition} \label{Definition 3.1}
Suppose, $\Gamma=S\times T$ for two manifolds $S$ and $T.$ $S$ is called \emph{possifold}, if there is a $\lambda \in \mathbb{C}$ with

\begin{equation} \label{18}
\begin{split}
& \int \emph{Vol}(t) 
e^{\frac{i}{\hbar} \int dt \left(\Theta_{\phi(t)}[\dot \phi(t)] -H[\phi(t)] \right)} \left( \cdots \right) \\
&= \lambda  \int_{\phi(t) \in S} \emph{Vol}'(t) 
e^{\frac{i}{\hbar} \int dt \left(\Theta^{\prime}_{\phi(t)}[\dot \phi(t)] -H'[\phi(t)] \right)} \left( \cdots \right)
\end{split}
\end{equation}

for a suited potential $\Theta'$ of a symplectic form on $S$ and a Hamiltonian $H':S \longrightarrow \mathbb{R}.$ $\emph{Vol}'$ denotes the induced volume form. The summation on the right-hand-side is over all paths in $S.$ The dots $(\cdots)$ on both sides denote an arbitrary expression depending for each time-slice only on $S.$ 
\end{definition}

Several remarks are in order. 

$S$ can be thought of as foliating $\Gamma$ along $T.$ Therefore, $S$ can be thought of as a submani\emph{fold} in the space of \emph{possi}bilities $\Gamma.$ Hence, the name possifold is used.

The conditions on $S$ in definition \ref{Definition 3.1} are non-trivial. A simple counterexample would be an odd-dimensional $S.$ $S$ is not a possifold due to the lack of a symplectic form.

A possifold gives via the triple $(S,\Theta',H')$ rise to a theory on its own right (in the sense of chapter \ref{Kapitel 2}). From \eqref{18}, the associated partition function \eqref{5} is

\begin{equation} \label{19}
Z' = \int_{\phi(t) \in S} \text{Vol}'(t) 
e^{\frac{i}{\hbar} \int dt \left(\Theta^{\prime}_{\phi(t)}[\dot \phi(t)] -H'[\phi(t)] \right)}.
\end{equation}

The observables $O': S \longrightarrow \mathbb{R}$ on $S$ can be mapped via $O(s,t)=O'(s)$ to the $T$-independent observables $O:\Gamma \longrightarrow \mathbb{R}$ on $\Gamma.$ Due to \eqref{18}, we have for the correlation functions of such related observables

\begin{equation} \label{20}
\langle O_1(t_1) \cdots O_n(t_n) \rangle_{Z}
= \langle O^{\prime}_1 (t_1) \cdots O^{\prime}_n (t_n) \rangle_{Z'}
\end{equation} 

where each side is computed with respect to the theory as indicated by the partition function in the subscript.

From a mathematical point of view the possifolds are the natural substructures of a given theory. This substructures are sensitive to the theory they are embedded in. 

Note that definition \ref{Definition 3.1} is the strictest notion of a possifold. In practice, we will be not that strict. While \eqref{18} requires the integration over the whole manifold $T,$ we will usually approximate this. Consequently, in equations \eqref{18}-\eqref{20}, the equality is replaced by a $\approx.$ The degree of approximation has to be specified and justified by how much of the $T$-integration is taken into account. Quite often, the quantum mechanical corrections due to $T$-integration are negligible and it will be sufficient to ignore them. 

The following simple example will occur quite often. It will have an interesting realization in field theory.

\begin{example} \label{Example 3.1}
Suppose, $\Gamma = \{(q,p) \} = \{ (q_1,\ldots,q_N,p_1,\ldots,p_N) \}$ to be parametrized by a global canonical chart. For a suited $k \in \mathbb{N},$ we can write $\Gamma = S \times T$ with $S=\{ (q_1,\ldots,q_k,p_1,\ldots,p_k) \}$ and
$T=\{(q_{k+1},\ldots,q_N,p_{k+1},\\\ldots,p_N)\}.$ We then have the inherited canonical potentials $\Theta' = \sum_{i=1}^k p_i \delta q_i$ on $S$ and $\Theta'' = \sum_{i=k+1}^N p_i \delta q_i$ on $T.$ The Hilbert space factorizes accordingly $\mathcal{H} = \mathcal{H}_S \otimes \mathcal{H}_T.$ If $S$ and $T$ are decoupled, i.e. the Hamiltonian decomposes as $H=H'+H''=H'(q_1,\ldots,p_k)+H''(q_{k+1},\ldots,p_N),$ then $S$ is a possifold. 

Indeed, \eqref{18} holds for the triple $(S,\Theta',H')$ with

\begin{equation} \label{21}
\lambda= \int \text{Vol}''(t) 
e^{\frac{i}{\hbar} \int dt \left(\Theta^{\prime \prime}_{\phi(t)}[\dot \phi(t)] -H''[\phi(t)] \right)}.
\end{equation}

This shows $S$ to be a possifold in the strictest sense. The notion is only approximate or is satisfied in a certain limit if the decoupling condition is only approximate or is satisfied in a certain limit.      
\end{example} 

In the remainder of this chapter, we want to collect some basic facts on possifolds to be used occasionally in applications.

\subsection{Basic Facts on Possifolds}
\label{Kapitel 3.1}

The following assertion states that each possifold can always be thought of as being characterized by a set of charges forming a closed algebra. 

\begin{lemma} \label{Lemma 3.1.1}
Let $S$ be a possifold. Locally, $S$ can be parametrized by the charges of generators of a closed algebra of symplectic symmetries.
\end{lemma}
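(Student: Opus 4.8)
The plan is to exploit the product structure $\Gamma = S \times T$ that defines a possifold and show that the tangent directions to $S$ at each point are spanned by Hamiltonian vectorfields whose generators close under the Poisson bracket. First I would recall from Definition \ref{Definition 3.1} that $S$ carries an induced symplectic potential $\Theta'$ and hence a non-degenerate symplectic form $\Omega' = \delta\Theta'$, so $\dim S = 2k$ is even and $(S,\Omega')$ is a symplectic manifold in its own right. By Darboux's theorem, invoked already in Section \ref{Kapitel 2.1}, there are local canonical coordinates $(q_1,\ldots,q_k,p_1,\ldots,p_k)$ on $S$ with $\Theta' = \sum_{i=1}^k p_i\,\delta q_i$. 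The natural candidate generators are simply these coordinate functions: each $q_i$ and each $p_i$ is a smooth map $S \longrightarrow \mathbb{R}$, and by the correspondence established in Section \ref{Kapitel 2.1} between observables and the symplectic symmetries they generate, each defines a symplectic vectorfield $X$ via $\partial_A H_X = X^B \Omega_{BA}$.

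Next I would verify that the charges of these generators genuinely coordinatize $S$ locally. The point is that the charge of $\phi \in S$ with respect to the vectorfield generated by $q_i$ (respectively $p_i$) is, up to a constant, just the value $q_i$ (respectively $p_i$) at $\phi$ — this is immediate in the canonical chart where $\{q_i,p_j\} = \delta_{ij}$ and the Hamiltonian vectorfield of a coordinate function is a coordinate translation. Thus the map sending $\phi$ to the tuple of its charges is, in the Darboux chart, essentially the identity, hence a local parametrization. The remaining task is to check that these generators close into an algebra. Since $\{q_i,q_j\} = 0$, $\{p_i,p_j\} = 0$, and $\{q_i,p_j\} = \delta_{ij}$, the Poisson brackets of the generators produce only constants, so the algebra closes with central extensions only — which is precisely the structure $\{H_X,H_Y\} = H_{[X,Y]} + K_{X,Y}$ identified at the end of Section \ref{Kapitel 2.1}. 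This is the Heisenberg algebra, a closed algebra of symplectic symmetries, and the claim follows locally.

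The main obstacle I anticipate is infinite-dimensionality. In field theory $S$ will typically be an infinite-dimensional symplectic manifold, where Darboux's theorem is delicate and the coordinate functions $q_i,p_i$ need not be admissible observables unless they satisfy the Regge–Teitelboim differentiability condition flagged in Section \ref{Kapitel 2.1}: only functionals that actually generate a well-defined symplectic symmetry count as observables, and naively chosen canonical coordinates may fail to be functionally differentiable because of boundary terms. I would therefore phrase the proof so that the generators are constructed to be Regge–Teitelboim differentiable from the outset, absorbing any boundary contributions into the choice of $\Theta'$ (equivalently the form $k$ of Section \ref{Kapitel 2.2}), and emphasize that the statement is purely local so that global obstructions to a single Darboux chart do not intervene. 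A secondary subtlety is that the word ``algebra'' should be read as allowing the central extensions $K_{X,Y}$; I would make explicit that closure ``up to central extensions'' is what is meant, consistent with the Lie-algebra-of-symplectic-symmetries discussion, so that the Heisenberg central term is not mistaken for a failure of closure.
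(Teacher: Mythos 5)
Your proposal is correct and follows essentially the same route as the paper's own (much terser) proof: invoke Darboux's theorem to obtain a local canonical chart $(\tilde q_i,\tilde p_i)$ on $S$ and observe that these coordinate functions are themselves generators of a closed algebra of symplectic symmetries (the Heisenberg algebra, closed up to the central terms $K_{X,Y}$). Your additional remarks on charges coinciding with coordinate values and on Regge--Teitelboim differentiability in the infinite-dimensional case are sound elaborations of points the paper leaves implicit.
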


\begin{proof}
Locally, $S$ has a chart of canonical coordinates $(\tilde{q}_i,\tilde{p}_i)$ for $i=1,\ldots,\\
 \frac{1}{2} \dim S.$ The $\tilde{q}_i$ and $\tilde{p}_i$ are generators of a closed algebra of symplectic symmetries.  
\end{proof}

Practically, Lemma \ref{Lemma 3.1.1} can be used in the search for possifolds in a given theory. We have to seek for charges forming a closed algebra. We want to make this more explicit. 

\subsubsection{The Possifold-Flow}
\label{Kapitel 3.1.1}

We assume the situation as in example \ref{Example 3.1} where the manifold $S$ is taken to be $\Gamma^{(k)} := S=\{ (q_1,\ldots,q_k,p_1,\ldots,p_k) \}$ depending on the choice of $k \in \mathbb{N}$ with the canonical potential $\Theta'=\Theta^{(k)}$ possessing the symplectic form $\Omega^{(k)} = \sum_{i=1}^k \delta p_i \wedge \delta q_i$ on $\Gamma^{(k)}.$ 

We then have the following

\begin{lemma} \label{Lemma 3.1.2}
Let $X^A=(X^q,X^p)$ be a symplectic symmetry in $(\Gamma,\Omega).$ For each fixed $(q_{k+1},\ldots,q_N,p_{k+1},\ldots,p_N)$ is then $X^{(k)}=(X^{q_1},\ldots,X^{q_k},X^{p_1},\ldots,\\ X^{p_k})$ a symplectic symmetry in $(\Gamma^{(k)},\Omega^{(k)}).$

Conversely, each symplectic symmetry $X^{(k)}$ in $(\Gamma^{(k)},\Omega^{(k)})$ defines a symplectic symmetry $X$ in $(\Gamma,\Omega)$ via

\begin{equation} \label{22}
\begin{split}
X^A &= (X^{q_1},\ldots,X^{q_k},X^{q_{k+1}},\ldots,X^{q_N},X^{p_1},\ldots,X^{p_k},X^{p_{k+1}},\ldots,X^{p_N}) \\
&= (X^{q_1},\ldots,X^{q_k},0,\ldots,0,X^{p_1},\ldots,X^{p_k},0,\ldots,0).
\end{split}
\end{equation}
\end{lemma}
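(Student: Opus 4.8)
The plan is to use the characterization of a symplectic symmetry already recorded in chapter \ref{Kapitel 2.1}: since $\Omega=\delta\Theta$ is closed, Cartan's formula gives $\mathcal{L}_X\Omega=\delta(X\cdot\Omega)$, so that $X$ is a symplectic symmetry precisely when the contracted $1$-form $X\cdot\Omega$ is closed. The entire proof then reduces to tracking how this closedness condition behaves under the block splitting of the canonical coordinates into the $S$-block $(q_1,\ldots,q_k,p_1,\ldots,p_k)$ and the $T$-block $(q_{k+1},\ldots,q_N,p_{k+1},\ldots,p_N)$.

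First I would write out the contraction in canonical coordinates,
\begin{equation*}
X\cdot\Omega = \sum_{i=1}^N \left( X^{p_i}\,\delta q_i - X^{q_i}\,\delta p_i \right) = \alpha_S + \alpha_T,
\end{equation*}
where $\alpha_S$ collects the terms with $i\le k$ and $\alpha_T$ those with $i>k$. By construction $X^{(k)}\cdot\Omega^{(k)}=\alpha_S$, with the understanding that for the contraction on $\Gamma^{(k)}$ the $T$-coordinates are held fixed. The key observation is that $\alpha_T$ is assembled entirely from the $T$-differentials $\delta q_i,\delta p_i$ with $i>k$, so upon applying $\delta$ it can never produce a $2$-form whose two legs are both $S$-differentials. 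Consequently the purely-$S$ component of $\delta(X\cdot\Omega)$ — the coefficient of the wedges $\delta q_a\wedge\delta q_b$, $\delta q_a\wedge\delta p_b$, $\delta p_a\wedge\delta p_b$ with $a,b\le k$ — comes solely from $\delta\alpha_S$ and, retaining only the $S$-derivatives, equals $\delta_S\bigl(X^{(k)}\cdot\Omega^{(k)}\bigr)$, where $\delta_S$ denotes the exterior derivative on $\Gamma^{(k)}$.

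For the forward direction I would then simply project: if $\delta(X\cdot\Omega)=0$ on $\Gamma$, then in particular its purely-$S$ component vanishes, giving $\delta_S(X^{(k)}\cdot\Omega^{(k)})=0$ for each frozen value of the $T$-coordinates, which is exactly the statement that $X^{(k)}$ is a symplectic symmetry on $(\Gamma^{(k)},\Omega^{(k)})$. For the converse I would feed the extension \eqref{22} into the same computation: setting all $T$-components of $X$ to zero kills $\alpha_T$ outright, so $X\cdot\Omega=\alpha_S=X^{(k)}\cdot\Omega^{(k)}$; moreover, because the surviving components $X^{q_i},X^{p_i}$ ($i\le k$) depend only on the $S$-coordinates, all their derivatives along the $T$-directions vanish, so $\delta(X\cdot\Omega)=\delta_S(X^{(k)}\cdot\Omega^{(k)})$. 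The assumed closedness on $\Gamma^{(k)}$ then yields $\delta(X\cdot\Omega)=0$ on $\Gamma$, proving $X$ symplectic.

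The calculations are routine; the only real care needed — and hence the main obstacle — is the bookkeeping distinguishing the exterior derivative $\delta$ on $\Gamma$ from $\delta_S$ on $\Gamma^{(k)}$, together with the logically crucial asymmetry between the two directions. In the forward direction the restricted field $X^{(k)}$ is permitted to depend on the frozen $T$-coordinates as parameters, which is harmless since $\delta_S$ ignores them; in the converse, by contrast, one must insist that $X^{(k)}$ be genuinely $T$-independent, for only then do the mixed and purely-$T$ projections of $\delta(X\cdot\Omega)$ vanish and the extension land back inside the symplectic-symmetry condition on all of $\Gamma$.
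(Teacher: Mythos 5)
Your proof is correct and is essentially the paper's own argument, just phrased in differential-form language: the closedness condition $\delta(X\cdot\Omega)=0$, decomposed by bidegree with respect to the $S$/$T$ block splitting, is exactly the component system \eqref{23} for $\mathcal{L}_X\Omega_{AB}=0$ that the paper restricts to $i,j\le k$ in the forward direction and verifies for the extension \eqref{22} in the converse. Your closing remark that the converse hinges on the $T$-independence of the surviving components (while the forward direction tolerates parametric $T$-dependence) is the same observation the paper compresses into ``the remaining conditions are empty,'' and you spell it out more carefully than the paper does.
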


\begin{proof}
For a symplectic symmetry in $(\Gamma,\Omega),$ we have $\mathcal{L}_X \Omega_{AB} =0$ which in canonical coordinates reads

\begin{equation} \label{23}
\begin{split}
\frac{\partial X^{p_i}}{\partial q_j} - \frac{\partial X^{p_j}}{\partial q_i} &= 0 \\
\frac{\partial X^{p_i}}{\partial p_j} + \frac{\partial X^{q_j}}{\partial q_i} &= 0 \\
\frac{\partial X^{q_i}}{\partial p_j} - \frac{\partial X^{q_j}}{\partial p_i} &= 0
\end{split}
\end{equation}

for all $i,j=1,\ldots,N.$ Hence, for each fixed $(q_{k+1},\ldots,q_N,p_{k+1},\ldots,p_N),$ the equations hold for all $i,j=1,\ldots,k.$ This is the condition for $X^{(k)}$ to be a symplectic symmetry in $(\Gamma^{(k)},\Omega^{(k)}).$ 

Conversely, \eqref{22} fulfills \eqref{23}. For $i,j=1,\ldots,k$ this follows from the requirement on $X^{(k)}.$ The remaining conditions are empty. 
\end{proof}

We look at the relation among the generators of the symplectic symmetries in Lemma \ref{Lemma 3.1.2}. While $A,B,\ldots$ denote the indexes of the coordinates in $\Gamma,$ let $\tilde A, \tilde B,\ldots$ denote the ones on $\Gamma^{(k)}.$ Lemma \ref{Lemma 3.1.2} associates to a symplectic symmetry $X$ on $(\Gamma,\Omega)$ the symplectic symmetry $X^{(k)}$ on $(\Gamma^{(k)},\Omega^{(k)}).$ The corresponding generator is determined by $\partial_{\tilde A} H_{X^{(k)}} = X^{(k) \tilde B} \Omega^{(k)}_{\tilde B \tilde A}$ and hence by

\begin{equation} \label{24}
\begin{split}
\delta H_{X^{(k)}} &= \partial_{\tilde A} H_{X^{(k)}} \delta \phi^{\tilde A} \\
&= X^{(k) \tilde B} \Omega^{(k)}_{\tilde B \tilde A} \delta \phi^{\tilde A} = \Omega^{(k)}(X^{(k)},\delta \phi).
\end{split}
\end{equation}

For $k=N,$ $H_{X^{(k=N)}} = H_X$ is the generator of $X$ on $(\Gamma,\Omega).$ 

If conversely, $X^{(k)}$ is a symplectic symmetry on $(\Gamma^{(k)},\Omega^{(k)}),$ the associated generator $H_{X^{(k)}}$ is equally determined by \eqref{24}. Viewed as a function on $\Gamma,$ it follows that it is the generator of \eqref{22} on $(\Gamma,\Omega).$ 

The relations are summarized in the following 

\begin{lemma} \label{Lemma 3.1.3}
Let $X$ be a symplectic symmetry on $(\Gamma,\Omega).$ The generator $H_{X^{(k)}}$ of the according to Lemma \ref{Lemma 3.1.2} induced symplectic symmetry $X^{(k)}$ on $(\Gamma^{(k)},\Omega^{(k)})$ is given by \eqref{24}. 

Conversely, for a symplectic symmetry $X^{(k)}$  on $(\Gamma^{(k)},\Omega^{(k)})$ with generator $H_{X^{(k)}},$ \eqref{22} is generated on $(\Gamma,\Omega)$ by $H_{X^{(k)}}.$  
\end{lemma}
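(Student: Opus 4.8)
The plan is to treat the two assertions separately, since the first is essentially a restatement of the definition of a generator while the second requires a short component check exploiting the block structure of $\Omega$ in canonical coordinates. For the first assertion I would argue as follows. By Lemma \ref{Lemma 3.1.2} the induced field $X^{(k)}$ is a symplectic symmetry on $(\Gamma^{(k)},\Omega^{(k)})$, so $\mathcal{L}_{X^{(k)}}\Omega^{(k)}=0$ and hence $\delta(X^{(k)}\cdot\Omega^{(k)})=0$. By the local existence of generators for symplectic symmetries established in Chapter \ref{Kapitel 2.1}, there is (at least locally) a function $H_{X^{(k)}}$ on $\Gamma^{(k)}$ with $\delta H_{X^{(k)}}=X^{(k)}\cdot\Omega^{(k)}$, which written in components is exactly \eqref{24}. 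Thus the first claim is immediate once Lemma \ref{Lemma 3.1.2} is invoked; no further computation is needed.

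For the converse the plan is to verify the defining relation $\partial_A H_{X^{(k)}}=X^B\Omega_{BA}$ directly on all of $\Gamma$, where $X$ is the extension \eqref{22} and $H_{X^{(k)}}$ is regarded as a function on $\Gamma$ that is \emph{independent} of the $T$-coordinates $(q_{k+1},\ldots,q_N,p_{k+1},\ldots,p_N)$. The key structural observation is that in the global canonical chart $\Omega=\sum_{i=1}^N\delta p_i\wedge\delta q_i$ is block diagonal: the first $k$ summands involve only the $\Gamma^{(k)}$-coordinates and reproduce $\Omega^{(k)}$, while the remaining summands involve only the $T$-coordinates, with no cross terms. I would then split the index $A$ into the $\Gamma^{(k)}$-indices $\tilde A$ and the complementary $T$-indices and check the relation on each.

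On the $\Gamma^{(k)}$-block, block-diagonality together with the vanishing of the $T$-components of $X$ (by \eqref{22}) collapses the right-hand side $X^B\Omega_{B\tilde A}$ to $X^{(k)\tilde B}\Omega^{(k)}_{\tilde B\tilde A}$, which equals $\partial_{\tilde A}H_{X^{(k)}}$ precisely because $H_{X^{(k)}}$ generates $X^{(k)}$ on $\Gamma^{(k)}$ by hypothesis. For a $T$-index $A$, the left-hand side vanishes because $H_{X^{(k)}}$ does not depend on the $T$-coordinates, while the right-hand side $X^B\Omega_{BA}$ also vanishes: block-diagonality forces $B$ to be the canonically conjugate $T$-index, whose $X$-component is zero by \eqref{22}. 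Hence both sides agree for every $A$, so $H_{X^{(k)}}$ generates $X$ on $(\Gamma,\Omega)$, which is the converse assertion.

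I expect no genuine obstacle; the content is bookkeeping. The one point demanding care is the interplay of the block-diagonality of $\Omega$ with the precise meaning of ``viewed as a function on $\Gamma$'' — namely the trivial extension that is constant along the $T$-directions — since it is exactly the simultaneous vanishing of the $T$-derivatives of $H_{X^{(k)}}$ and of the $T$-components of $X$ that makes the two non-$\Gamma^{(k)}$ contributions cancel. I would also flag that, as for generators in general, the statement is local: the generator is guaranteed only through the local-exactness argument $0=\mathcal{L}_X\Omega=\delta(X\cdot\Omega)$, consistent with the ``Locally'' qualifier already present in Lemma \ref{Lemma 3.1.1}.
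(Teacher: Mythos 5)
Your proposal is correct and follows essentially the same route as the paper: the paper establishes Lemma \ref{Lemma 3.1.3} in the discussion immediately preceding it, obtaining the first claim from the defining relation $\partial_{\tilde A} H_{X^{(k)}} = X^{(k)\tilde B}\Omega^{(k)}_{\tilde B \tilde A}$ of a generator, and the converse by regarding $H_{X^{(k)}}$ as a $T$-independent function on $\Gamma$. Your explicit index split using the block-diagonality of $\Omega$ in the canonical chart merely spells out the computation the paper compresses into ``Viewed as a function on $\Gamma,$ it follows that it is the generator of \eqref{22}.''
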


Symplectic symmetries on $(\Gamma^{(k)},\Omega^{(k)})$ induce symplectic symmetries on $(\Gamma,\Omega).$ We have clarified the relation among generators. The following relates the associated Lie-algebras with respect to the Lie-bracket on the respective manifold. 

\begin{lemma} \label{Lemma 3.1.4}
For a symplectic symmetry $X^{(k)}$ on $(\Gamma^{(k)},\Omega^{(k)}),$ let $X$ denote the associated symplectic symmetry \eqref{22} on $(\Gamma,\Omega).$ An algebra of symplectic symmetries on $(\Gamma^{(k)},\Omega^{(k)})$ induces this way an algebra of symplectic symmetries on $(\Gamma,\Omega)$ and the associated Lie-algebras are equal.  
\end{lemma}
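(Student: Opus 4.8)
The plan is to prove the statement by a direct computation of the Lie bracket of vector fields on $\Gamma$, exploiting the special structure of the embedding \eqref{22}. I split the coordinate indices $A=1,\ldots,2N$ on $\Gamma$ into the indices $\tilde A$ belonging to the $S$-coordinates $(q_1,\ldots,q_k,p_1,\ldots,p_k)$ and the complementary indices belonging to the $T$-coordinates $(q_{k+1},\ldots,q_N,p_{k+1},\ldots,p_N)$. The observation to record first is that, for any symplectic symmetry $X^{(k)}$ on $(\Gamma^{(k)},\Omega^{(k)})$, the associated vector field $X$ defined by \eqref{22} has two defining features: its $T$-components vanish identically, and its remaining components $X^{\tilde A}=X^{(k)\tilde A}$ depend only on the $S$-coordinates, since $X^{(k)}$ is by construction a vector field on the single slice $\Gamma^{(k)}$.

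With this in place, I would compute the bracket of two such vector fields $X,Y$ associated through \eqref{22} to $X^{(k)},Y^{(k)}$. In components the Lie bracket reads
\begin{equation}
[X,Y]^C = X^A \partial_A Y^C - Y^A \partial_A X^C.
\end{equation}
Because $X^A$ and $Y^A$ vanish for every $T$-index $A$, the sums collapse to sums over the $S$-indices $\tilde A$ only. For a $T$-index $C$ the factors $Y^C$ and $X^C$ vanish, so $[X,Y]^C=0$, and the bracket again has vanishing $T$-components. For an $S$-index $\tilde C$ one is left with
\begin{equation}
[X,Y]^{\tilde C} = X^{\tilde A}\partial_{\tilde A} Y^{\tilde C} - Y^{\tilde A}\partial_{\tilde A} X^{\tilde C},
\end{equation}
and since the $S$-components depend only on $S$-coordinates the derivatives $\partial_{\tilde A}$ act purely within $\Gamma^{(k)}$. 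The right-hand side is therefore exactly the $\tilde C$-component of the bracket $[X^{(k)},Y^{(k)}]$ computed on $(\Gamma^{(k)},\Omega^{(k)})$. In other words, $[X,Y]$ is precisely the vector field obtained from $[X^{(k)},Y^{(k)}]$ through the embedding \eqref{22}.

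From here the conclusion follows directly. If $\{X^{(k)}\}$ is a Lie algebra of symplectic symmetries on $(\Gamma^{(k)},\Omega^{(k)})$, closedness places $[X^{(k)},Y^{(k)}]$ inside the algebra, whence its image $[X,Y]$ lies among the associated vector fields \eqref{22}; thus the induced symmetries on $(\Gamma,\Omega)$ form a Lie algebra (and each is indeed a symplectic symmetry by the general fact, noted in chapter \ref{Kapitel 2.1}, that brackets of symplectic symmetries are symplectic symmetries). The assignment $X^{(k)}\mapsto X$ is manifestly linear and injective, being extension by zero, and the computation above shows it intertwines the two brackets; it is therefore a Lie-algebra isomorphism onto its image, which is the asserted equality of the two Lie algebras.

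I expect the only genuine subtlety — and hence the main point to get right — to be the vanishing of the $T$-components and of the $T$-derivatives in the bracket. Both rely essentially on the fact that the embedding \eqref{22} produces vector fields whose $S$-components are independent of the $T$-coordinates, a property that the forward direction of Lemma \ref{Lemma 3.1.2} alone would not supply, since restricting a generic symmetry of $(\Gamma,\Omega)$ to a slice generally yields slice-dependent components. It is precisely the converse (embedding) construction that guarantees the required $T$-independence and makes the bracket computation close.
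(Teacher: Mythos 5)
Your proof is correct and follows essentially the same route as the paper's: a direct component computation of the Lie bracket, exploiting that the embedding \eqref{22} extends by zero and has $T$-independent $S$-components, so that the bracket on $(\Gamma,\Omega)$ reduces to the bracket on $(\Gamma^{(k)},\Omega^{(k)})$. The only presentational difference is that you first establish that the embedding intertwines the brackets and then deduce closure, whereas the paper assumes closure with structure constants $f_{abc}$ and pushes that component relation through \eqref{22}; the computational content is identical.
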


\begin{proof}
For symplectic symmetries $X^{(k)}_a, X^{(k)}_b, X^{(k)}_c$ on $\Gamma^{(k)},$ suppose the commutator relation 

\begin{equation} \label{25}
\left[ X^{(k)}_a, X^{(k)}_b \right] = f_{abc} X^{(k)}_c
\end{equation}

with structure constants $f_{abc} \in \mathbb{R}.$ This means on $\Gamma^{(k)}$ the relation

\begin{equation*}
X^{(k) \tilde B}_a \partial_{\tilde B} X^{(k) \tilde A}_b - X^{(k) \tilde B}_b \partial_{\tilde B} X^{(k) \tilde A}_a
=f_{abc} X^{(k) \tilde A}_c.
\end{equation*}

By definition of \eqref{22}, this implies the relation on $\Gamma$ 

\begin{equation*}
X^B_a \partial_B X^A_b - X^B_b \partial_B X^A_a = f_{abc} X^A_c
\end{equation*}

which means $[X_a,X_b]=f_{abc}X_c$ on $\Gamma$ and proves the assertion. 
\end{proof}

As the generator algebra respects the Lie-algebra, this implies 

\begin{lemma} \label{Lemma 3.1.5}
Let $X^{(k)}$ and $Y^{(k)}$ be symplectic symmetries on $(\Gamma^{(k)},\Omega^{(k)})$ and $X,Y$ the associated ones on $(\Gamma,\Omega)$ in the context of Lemma \ref{Lemma 3.1.2}. On $(\Gamma^{(k)},\Omega^{(k)}),$ one has for the associated generators

\begin{equation} \label{26}
\{ H_{X^{(k)}}, H_{Y^{(k)}} \} = H_{[X^{(k)},Y^{(k)}]} + K_{X^{(k)},Y^{(k)}}.
\end{equation}

On $(\Gamma,\Omega)$ holds the relation

\begin{equation} \label{27}
\{ H_{X}, H_{Y} \} = H_{[X,Y]} + K_{X^{(k)},Y^{(k)}}
\end{equation}

with constant c-numbers $K_{X^{(k)},Y^{(k)}}.$
\end{lemma}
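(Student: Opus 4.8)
The plan is to leverage Lemma \ref{Lemma 3.1.4}, which has already established that the map $X^{(k)} \mapsto X$ defined by \eqref{22} is a Lie-algebra homomorphism with \emph{equal} structure constants, and combine it with the general charge-algebra relation $\{H_X, H_Y\} = H_{[X,Y]} + K_{X,Y}$ stated at the end of chapter \ref{Kapitel 2.1}. The first equation \eqref{26} is then simply that general relation applied to the theory $(\Gamma^{(k)}, \Omega^{(k)})$, which is itself a symplectic manifold and hence falls under the scope of that earlier result; there is nothing new to prove here beyond naming the central extension $K_{X^{(k)}, Y^{(k)}}$. So the content of the lemma is entirely in \eqref{27}: I must show that the \emph{same} c-number $K_{X^{(k)}, Y^{(k)}}$ governs the central extension of the charge algebra on the big space $(\Gamma, \Omega)$.

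First I would pin down the key structural input from Lemma \ref{Lemma 3.1.3}: the generator $H_X$ of the lifted symmetry $X$ on $(\Gamma,\Omega)$ is, when restricted to a fixed slice of the $T$-coordinates $(q_{k+1},\ldots,p_N)$, precisely $H_{X^{(k)}}$, since \eqref{24} determines both by the identical defining relation $\delta H_{X^{(k)}} = \Omega^{(k)}(X^{(k)}, \delta\phi)$ and the lifted field $X$ has vanishing $T$-components. The point to verify is that $H_X$ has vanishing derivatives in the $T$-directions, i.e. $\partial_{q_a} H_X = \partial_{p_a} H_X = 0$ for $a > k$. This follows from the generator equation $\partial_A H_X = X^B \Omega_{BA}$ together with the fact that $X^B = 0$ for all $T$-indices $B$ and the block-diagonal structure of $\Omega$ in the canonical chart: for a $T$-index $A$, the right-hand side sums only over $S$-indices $B$ paired against $\Omega_{BA}$, which vanishes because $\Omega = \sum_i \delta p_i \wedge \delta q_i$ mixes only conjugate pairs. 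Hence $H_X$ is effectively a function on $\Gamma^{(k)}$ alone, equal to $H_{X^{(k)}}$ up to a constant.

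With this established, I would then compute the Poisson bracket on $(\Gamma,\Omega)$ directly from its definition $\{f,g\} = \Omega^{AB}\partial_A f \partial_B g$. Because both $H_X$ and $H_Y$ depend only on the $S$-coordinates and because $\Omega^{AB}$ is block diagonal (the inverse of a block-diagonal symplectic form), the sum over $A,B$ collapses to a sum over $S$-indices alone, so that $\{H_X, H_Y\}_\Gamma = \{H_{X^{(k)}}, H_{Y^{(k)}}\}_{\Gamma^{(k)}}$ as functions. Substituting \eqref{26} into this identity, and invoking Lemma \ref{Lemma 3.1.3} once more to recognize that $H_{[X^{(k)}, Y^{(k)}]}$ on $\Gamma^{(k)}$ is the restriction of $H_{[X,Y]}$ on $\Gamma$ (using Lemma \ref{Lemma 3.1.4} to identify $[X,Y]$ as the lift of $[X^{(k)}, Y^{(k)}]$), yields exactly \eqref{27} with the \emph{same} constant $K_{X^{(k)}, Y^{(k)}}$, since a constant function on $\Gamma^{(k)}$ lifts to the same constant on $\Gamma$.

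The main obstacle I anticipate is not in the algebra but in the bookkeeping of the central term. The general relation at the end of chapter \ref{Kapitel 2.1} asserts only existence of \emph{some} constant $K_{X,Y}$; the substance of this lemma is that the lift does not generate a \emph{new} or \emph{shifted} central charge. I would address this by noting that the central extension arises as the obstruction to $H_{[X,Y]}$ matching $\{H_X,H_Y\}$, and since every ingredient ($\{\cdot,\cdot\}$, $H_{[X,Y]}$, and the generators themselves) descends identically from the $\Gamma^{(k)}$ computation through the block-diagonal collapse just described, the obstruction must be numerically identical. A subtlety worth flagging explicitly is the freedom to add constants to generators: one should fix this ambiguity consistently (e.g. by matching values at a chosen base point) so that ``the same $K$'' is a well-defined statement rather than an artifact of independent normalization choices on the two manifolds.
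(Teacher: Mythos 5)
Your proposal is correct and follows essentially the same route as the paper: the paper's (very terse) proof likewise obtains \eqref{26} as the general property of generators stated at the end of chapter \ref{Kapitel 2.1} and then derives \eqref{27} from \eqref{26} via Lemmas \ref{Lemma 3.1.3} and \ref{Lemma 3.1.4}. Your write-up simply fills in the details the paper leaves implicit (the block-diagonal collapse of the Poisson bracket, the $T$-independence of the lifted generators, and the normalization of the additive constant), all of which are consistent with the paper's intended argument.
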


\begin{proof}
\eqref{26} is a general property of generators. \eqref{27} follows from \eqref{26} using Lemma \ref{Lemma 3.1.3} and \ref{Lemma 3.1.4}.
\end{proof}

The space $\Gamma^{(k)}$ describes a part of the degrees of freedom of the original space $\Gamma.$ In the previous statements of this section, we have given the relation among the symplectic symmetries and generators in both spaces. The associated maps describe the action under the reduction 

\begin{equation} \label{28}
N \to k
\end{equation}

of the $N$ degrees of freedom in $\Gamma$ to the $k$ degrees of freedom in $\Gamma^{(k)}.$ In analogy with the Wilsonian renormalization group flow reducing high energy degrees of freedom, we will denote the reduction \eqref{28} as the \emph{possifold flow}. 

The reason for this terminology is that the $\Gamma^{(k)}$ are good candidates for possifolds. In fact, in practice we will mostly look at these spaces in order to construct possifolds. Similarly to the Wilsonian notion, definition \ref{Definition 3.1} requires integrating out the reduced degrees of freedom in \eqref{28}. As explained, this integration determines to which degree of approximation $\Gamma^{(k)}$ defines a possifold. For instance, $\Gamma^{(k)}$ defines a possifold in case the decoupling condition holds as in example \ref{Example 3.1}.\\ 
\\
Finally, we want to look at the realization of the possifold flow in the field theory case. This will be the case most important for us. We consider again the situation associated with the discussion to Fig. \ref{Fig. 3}. 

From \eqref{16}, we have $\Omega=\int_\Sigma \omega$ where the presymplectic current $\omega = \delta \theta$ is a $2$-form on $\Gamma$ and a form of degree $d-1$ on $M.$ It is easily shown that $\omega$ is closed on spacetime. This follows from $d \omega = \delta(d \theta) = \delta ( \delta L + E[\phi] \delta \phi) = 0$ on $\Gamma$ as the elements of $\Gamma$ satisfy the equations of motion \eqref{10}. Let $F$ be the associated $2$-form on $\Gamma$ and $(d-2)$-form on $M$ with $\omega=dF.$ For the situation in Fig. \ref{Fig. 3}, we then have as in \eqref{13} 

\begin{equation} \label{29}
\Omega = \sum_{\x \in \Sigma} \delta p_\x \wedge \delta q_\x = \int_\Sigma \omega = \oint_{\partial \Sigma} F.
\end{equation}

Then, the choice of a $k$ in the possifold flow \eqref{28} corresponds to the choice of a subset $\Sigma(B) \subseteq \Sigma$ bounded by a codimension-2 surface $B=\partial(\Sigma(B)).$ In Fig. \ref{Fig. 3} this could correspond, for instance, to a choice as in Fig. \ref{Fig. 4}. 

\begin{figure}[h!]
\centering
  \includegraphics[trim = 0mm 100mm 20mm 5mm, clip, width=0.7\linewidth]{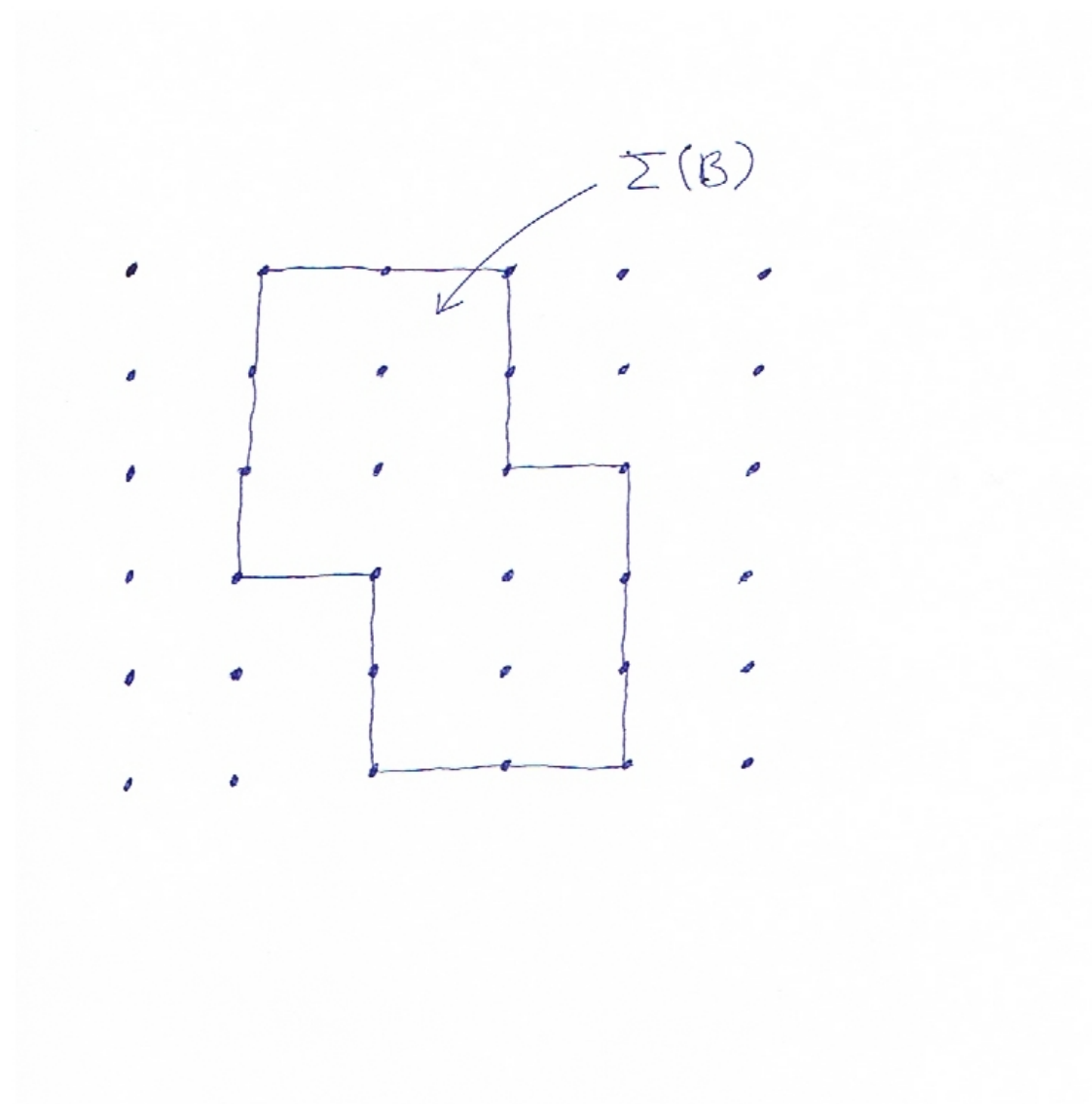}
  \caption{Possifold flow as a deformation of the boundary from $\partial \Sigma$ to $B.$}
	\label{Fig. 4}
\end{figure} 

Following the notations of this section the space $\Gamma^{(k)}$ corresponds to the space $\Gamma^{(B)}$ with the degrees of freedom covered by $\Sigma(B).$ The associated $\Omega^{(B)}$ is then 

\begin{equation} \label{30}
\Omega^{(B)} = \int_{\Sigma(B)} \omega = \oint_B F
\end{equation}

while \eqref{24} reads as

\begin{equation} \label{31}
\delta H_{X^{(B)}} = \Omega^{(B)} [X^{(B)},\delta \phi] = \int_{\Sigma(B)} \omega[X, \delta \phi] = \oint_B F[X, \delta \phi]
\end{equation}

for a symplectic symmetry $X$ on $\Gamma$ and its image $X^{(B)}$ in $\Gamma^{(B)}$ under the possifold flow.

We learn that the possifold flow \eqref{28} is in field theory realized as a deformation of the boundary

\begin{equation} \label{32}
\partial \Sigma \to B
\end{equation}

to a codimension-2 surface $B$ associated with a replacement in the corresponding formulas such as \eqref{31}.

With these observations, the meaning of the shift $\theta \to \theta + dk$ discussed in chapter \ref{Kapitel 2} becomes apparent. We have already seen that the symplectic potential \eqref{16} is sensitive to the choice of $k$ at $\partial \Sigma.$ We now understand its meaning in the interior. It controls the splitting of the degrees of freedom among $\Sigma(B)$ and its complement for the reduction \eqref{32}. The choice is arbitrary but determined by the choice of $k$ in the interior of $\Sigma.$ A sensible criterion is of course the requirement for $\Gamma^{(B)}$ to be a possifold to a high degree of approximation.\footnote{The discussion of this chapter justifies the approach of \cite{Averin:2018owq} in the analysis of black holes. \cite{Averin:2019zsi} reads as a proposal to explain the universality of black hole entropy as a consequence of the possifold flow. We, however, do not go further into this topic here.}

\section{Applications}
\label{Kapitel 4}

The discussion so far has been quite formal. The purpose of this chapter is to provide an outlook on some applications. In the next section, we illustrate how the presented formalism is used in a concrete example.

\subsection{Lieb-Liniger Model}
\label{Kapitel 4.1}

We consider a field theory with a simple interaction term. Our model is known as the Lieb-Liniger model and is discussed recently in \cite{Kanamoto_2003}. It describes non-relativistic particles with an attractive interaction. A more general review on this type of models and their terminology is in \cite{Dalfovo:1999zz}. In what follows, we use the notation of the previous chapters. We consider a field $\Psi(x,t)$ on $M=\Sigma \times \mathbb{R}$ with $\Sigma=S^1$ being parametrized by $x \in [0,2\pi].$ We take the equations of motion parametrizing $\Gamma$ as in \eqref{10} to be the Gross-Pitaevskii equation

\begin{equation} \label{33}
i \frac{\partial \Psi}{\partial t} = \left( -\Delta -g|\Psi|^2 \right) \Psi.
\end{equation}

On $\Gamma,$ we define the symplectic potential

\begin{equation} \label{34}
\Theta_\Psi [\delta \Psi] = \frac{i}{2} \int dx \Psi^\dagger \delta \Psi +h.c.
\end{equation}

where the integration is over $\Sigma \times \{t=0\}$ in all integrals to follow in this section. It induces the symplectic form

\begin{equation} \label{35}
\Omega_\Psi[\delta_1 \Psi, \delta_2 \Psi] = i \int dx \left( \delta_1 \Psi^\dagger \delta_2 \Psi - \delta_2 \Psi^\dagger \delta_1 \Psi \right).
\end{equation}

With \eqref{35} it is easily checked that the time evolution \eqref{33} is generated by the Hamiltonian

\begin{equation} \label{36}
H = \int dx \Psi^\dagger (-\Delta) \Psi - \frac{g}{2} \int dx \Psi^\dagger \Psi^\dagger \Psi \Psi.
\end{equation}

This specifies the theory $(\Gamma,\Theta,H)$ under consideration. \eqref{33} has the $U(1)$-invariance $\Psi \to e^{i \alpha} \Psi.$ Again from \eqref{35}, this is a symplectic symmetry generated by $N[\Psi]=\int dx \Psi^\dagger \Psi$ which is the particle number. 

A point $\Psi \in \Gamma$ is a solution of \eqref{33} and hence uniquely determined by $\Psi(t=0,x)$ which we can decompose on $\Sigma$ as

\begin{equation} \label{37}
\Psi(t=0,x)=\sum_{k \in \mathbb{Z}} a_k \frac{e^{ikx}}{\sqrt{2 \pi}}
\end{equation}

for unique $a_k \in \mathbb{C}.$ They provide, therefore, a parametrization of $\Gamma$ 

\begin{equation} \label{38}
\Gamma = \{a_k \in \mathbb{C} | k \in \mathbb{Z} \}.
\end{equation}

We rewrite these coordinates for $k \in \mathbb{Z} \backslash \{0\}$ as

\begin{equation} \label{39}
\begin{split}
a_k &= \frac{1}{\sqrt{2}}(q_k+ip_k) \\
q_0 &= - \operatorname{Arg} (a_0) \\
p_0 &= |a_0|^2
\end{split}
\end{equation}

with $q_k,p_k \in \mathbb{R}.$ Up to an irrelevant exact form, the symplectic potential takes in this coordinates the canonical form

\begin{equation} \label{40}
\Theta_\Psi [\delta \Psi] = p_0 \delta q_0 +\sum_{k \neq 0} p_k \delta q_k.
\end{equation}

In the parametrization \eqref{38}, the particle number is given by $N=\sum_k a_k^\dagger a_k.$ For a fixed particle number $N$ the lowest-energy state is expected to be given by $a_k=\sqrt{N} \delta_{k,0}$ for sufficiently small $g.$ The corresponding solution of \eqref{33} is

\begin{equation} \label{41}
\Phi(t,x)=\sqrt{\frac{N}{2 \pi}} e^{-i \mu t}
\end{equation}    

with $\mu = -\frac{gN}{2\pi}.$ One is now often interested in describing excitations of the Bose-Einstein condensate \eqref{41}. These excitations are given by the higher modes in \eqref{38}. Therefore, we consider the submanifold of \eqref{38} given by $q_0=0$ and $p_0=N-\sum_{k \neq 0} |a_k|^2$ on which the pull-back of \eqref{40} takes the form

\begin{equation} \label{42}
\Theta=\sum_{k \neq 0} p_k \delta q_k.
\end{equation}

Equation \eqref{42} now precisely reflects the situation in example \ref{Example 3.1}. We want to explore which of the submanifolds obtained by the possifold flow lead to sensible possifolds. 

To this end, we split the Hamiltonian

\begin{equation} \label{43}
H=H_{quadratic}+H_{higher}
\end{equation}

into a part at most quadratic in the coordinates $q_k,p_k$ and a part containing higher order terms. Since $H_{higher}$ results from the interaction term in \eqref{36}, it is suppressed by a factor of $g.$ 

We diagonalize $H_{quadratic}$ by introducing for each $k \in \mathbb{N}$ the canonical coordinates $A_k,B_k,C_k,D_k \in \mathbb{R}$ by

\begin{equation*}
\begin{split}
\begin{pmatrix}
q_k \\
q_{-k}
\end{pmatrix}
&=\frac{1}{\sqrt{2}} 
\begin{pmatrix}
1 \\
1
\end{pmatrix}
A_k+\frac{1}{\sqrt{2}}
\begin{pmatrix}
1 \\
-1
\end{pmatrix}
B_k \\
\begin{pmatrix}
p_k \\
p_{-k}
\end{pmatrix}
&= \frac{1}{\sqrt{2}}
\begin{pmatrix}
1 \\
-1
\end{pmatrix}
C_k+\frac{1}{\sqrt{2}}
\begin{pmatrix}
1 \\
1
\end{pmatrix}
D_k
\end{split}
\end{equation*}

such that \eqref{42} reads

\begin{equation} \label{44}
\Theta=\sum_{k \in \mathbb{N}} (D_k \delta A_k + C_k \delta B_k).
\end{equation}

Defining the parameter $\gamma=\frac{gN}{2\pi}$ controlling the collective interaction, we obtain

\begin{equation} \label{45}
H_{quadratic} = -\frac{\gamma N}{2} + \sum_{k \in \mathbb{N}} \left( \frac{1}{2}k^2D^2_k + (\frac{1}{2}k^2-\gamma)A^2_k+(\frac{1}{2}k^2-\gamma)C^2_k + \frac{1}{2}k^2B^2_k\right).
\end{equation}

For each $k \in \mathbb{N},$ this can be brought into the form of two non-relativistic particles by rescaling

\begin{equation} \label{46}
\begin{split}
P_k &= kD_k \\
Q_k &= \frac{1}{k}A_k \\
\tilde{P}_k &= (k^2-2\gamma)^{\frac{1}{2}} C_k \\
\tilde{Q}_k &= (k^2-2\gamma)^{-\frac{1}{2}} B_k.
\end{split}
\end{equation}

This coordinates remain canonical with the term in the brackets in \eqref{45} becoming

\begin{equation*}
\frac{1}{2}P^2_k+\frac{1}{2}k^2(k^2-2\gamma)Q^2_k + \frac{1}{2}\tilde{P}^2_k+\frac{1}{2}k^2(k^2-2\gamma)\tilde{Q}^2_k
\end{equation*}

as long as $2\gamma < k^2.$ The two particles, therefore, correspond to harmonic oscillators with excitation energies given by the well-known Bogoliubov-spectrum

\begin{equation} \label{47}
\Delta E_k=\sqrt{k^2(k^2-2\gamma)}.
\end{equation}

The summation over modes $k \in \mathbb{N}$ with $k^2>2\gamma$ is due to the high excitation energies \eqref{47} suppressed. We can hence approximate the summation by ignoring those excitations, i.e. we set $A_k=\ldots=D_k=0$ or $a_k=a_{-k}=0$ in \eqref{38}. This is precisely the condition on the possifold flow determining sensible possifolds we asked for below equation \eqref{42}. What does it imply?

For $\gamma<\frac{1}{2},$ the condition implies that all modes $k \in \mathbb{N}$ can be ignored. The possifold flow leads to a possifold collapsed to the point $\{ a_k=\sqrt{N} \delta_{k,0} | k \in \mathbb{Z} \}.$ It corresponds to the homogeneous condensate solution \eqref{41}. In other words, this means the ground state for $\gamma<\frac{1}{2}$ is well-described by the classical field \eqref{41}. 

The situation changes as we further increase $\gamma$ towards the critical point $\gamma=\frac{1}{2}.$ The summation over modes $a_{\pm 1}$ can no longer be ignored in the computation of observables. \eqref{41} ceases to be a good approximation of the ground-state. 

However, taking into account the lowest-lying modes $a_{\pm 1}$ leads to a sensible possifold. Concretely, it is given by the embedding

\begin{equation} \label{48}
S:=\{ a_k \in \mathbb{C} | k=\pm1, \sum_{k=\pm1} |a_k|^2 \le N,a_0=\sqrt{N-\sum_{k=\pm1} |a_k|^2}, a_{|k| \ge 2}=0 \}
\end{equation}    

in \eqref{38}.

In the standard Bogoliubov-approximation the Bogoliubov-transformation is encoded in the coordinates \eqref{46}. At the critical point $\gamma=\frac{1}{2},$ they break down. However, the possifold \eqref{48} gives rise to a regular theory even at the critical point. We describe it by returning to the regular coordinates \eqref{44} which we rephrase as

\begin{equation} \label{49}
\begin{split}
p_1 &= D_1 \\
q_1 &= A_1 \\
p_2 &= C_1 \\
q_2 &= B_1.
\end{split}
\end{equation}

The possifold flow then implies on $S$ 

\begin{equation} \label{50}
\Theta'=p_1 \delta q_1+p_2 \delta q_2
\end{equation}

and for the Hamiltonian

\begin{equation} \label{51}
H^{\prime}_{quadratic}=-\frac{\gamma N}{2}+\frac{1}{2}p_1^2+(\frac{1}{2}-\gamma)q_1^2+(\frac{1}{2}-\gamma)p_2^2+\frac{1}{2}q_2^2.
\end{equation}

Taking into account the higher order terms in \eqref{43}, we obtain

\begin{equation} \label{52}
\begin{split}
H^{\prime}_{higher}=&\frac{g}{16\pi}((q_1+q_2)^2+(p_1+p_2)^2+(q_1-q_2)^2+(p_1-p_2)^2) \cdot \\
&((q_1+q_2)(q_1-q_2)-(p_1+p_2)(p_1-p_2)) \\
&+\frac{g}{32\pi}(((q_1+q_2)^2+(p_1+p_2)^2)^2+((q_1-q_2)^2+(p_1-p_2)^2)^2 \\
&+((q_1+q_2)^2+(p_1+p_2)^2)((q_1-q_2)^2+(p_1-p_2)^2)).
\end{split}
\end{equation}

The possifold $S$ gives this way rise to a dual theory $(S,\Theta',H')$ describing the lowest-lying modes $a_{\pm 1}$ in \eqref{38}. As described, it is a valid approximation of the original theory up to the point when the higher modes $a_{\pm 2}$ have to be included. This regime of applicability includes especially the critical point $\gamma=\frac{1}{2}.$

Within our presented example, we have demonstrated the use of the formalism of the previous chapters. Before concluding this section, we briefly draw some conclusions of the dual theory obtained. 

Let $\gamma < \frac{1}{2}.$ The minimum of \eqref{51} is $q_{1,2}=p_{1,2}=0$ corresponding to the homogeneous condensate \eqref{41}. As already explained, this classical solution provides a good approximation to the ground state.

Let $\gamma > \frac{1}{2}.$ In this case, \eqref{51} changes its definiteness and the ground state is obtained by populating higher modes $a_{\pm k}$ in \eqref{37}. Instead of the homogeneous condensate \eqref{41}, the ground state is described by the bright soliton solution (see \cite{Kanamoto_2003} for further details).

The critical transition point $\gamma=\frac{1}{2}$ exhibits some interesting features as we already experienced. Here, \eqref{51} is degenerate in the directions $X_1=\partial_{q_1}$ and $X_2=\partial_{p_2}.$ From \eqref{23}, these vectorfields present symplectic symmetries. Because of $\mathcal{L}_{X_1} \Omega'=\mathcal{L}_{X_2} \Omega'=0$ and $\mathcal{L}_{X_1} H^{\prime}_{quadratic}=\mathcal{L}_{X_2} H^{\prime}_{quadratic}=0,$ the summation over paths in $S$ possesses an enhanced symmetry (up to $O(g)$-effects \eqref{52}). At the same time, this summation cannot be ignored at this critical point. We will see this behavior to occur in various other places.

The mentioned symmetry is broken explicitly by finite $g$ or by departure from the critical point $\gamma=\frac{1}{2}.$ It is exact in the Bogoliubov-limit

\begin{equation} \label{53}
g \to 0, N \to \infty, \gamma=\frac{gN}{2\pi}=\text{fixed}.
\end{equation}

While the standard Bogoliubov-approximation breaks down at the critical point $\gamma=\frac{1}{2},$ our dual theory stays regular and allows in principle a systematic treatment of finite $g$ effects beyond the limit \eqref{53}.

At the critical point, one might expect the dual theory obtained to be conformally invariant. Since it is in the limit \eqref{53} given by two non-relativistic free particles, it trivially has a $\mathfrak{sl}(2,\mathbb{R})$-invariance. However, the associated one dimensional conformal transformations are time-dependent. That is, in the summation over paths in $S,$ they are foliation-dependent. Symmetries and Ward-identities in the context of the summation \eqref{4} certainly deserve a treatment. We will here, however, not go further into this topic.

\subsection{Soliton-Possifold Correspondence}
\label{Kapitel 4.2}

In the last section, we have seen the possifold notion in an example. Keeping this concrete realization in mind, we want to go back to this general notion and clarify the physical meaning of a possifold in a given theory.

Consider the situation of definition \ref{Definition 3.1} for a given theory. The elements $\phi \in \Gamma=S \times T$ are of the form $\phi=(s,t)$ for $s \in S$ and $t \in T.$ According to \eqref{10}, $\phi$ corresponds to a solution of the equations of motion. 

In the example of the previous section this could have been the homogeneous condensate or the bright soliton solution. More generally, a solution of the equations of motion associated with the effective action which spontaneously breaks Poincar\'{e}-invariance is in \cite{Peskin:1995ev} defined to be a soliton. We relax the use of this definition a little and will denote as soliton simply a solution of the equations of motion. This does not fit quite the general notion. However, the ``solitons'' we will consider in practice will be mostly stationary, localized solutions like black holes or branes. In this way, we justify this little abuse of language in order to have a common word describing the situation. 

Consequently, $\phi=(s,t) \in \Gamma$ corresponds to a soliton. And again due to \eqref{10}, a shift in $s$ or in $t$ corresponds to an excitation of this soliton. For fixed $t \in T,$ the set $S \times \{t\} \subseteq \Gamma$ is an embedding of the possifold $S$ in $\Gamma.$ It describes an ensemble of excitations of the soliton $\phi.$ As far as observables concerning only those excitations are considered, they are by definition \ref{Definition 3.1} described via the theory associated to the possifold $S.$ 

We learn what the physical meaning of a possifold is. It can be thought as describing an ensemble of soliton excitations in a given theory. 

It is conversely natural to group the excitations of a given soliton of a theory into possifolds. We can think here of black holes, baryons, $D$- or $M$-branes as examples. The excitations of a soliton often possess a natural grouping. For instance, the fluctuations of a $D$-brane we would like to bundle in the manifold $S.$ The hope is that these excitations do have a description in terms of a theory, i.e. that $S$ is a possifold. Excitations like those of the gravitational field far away from the branes' throat are gathered in the manifold $T.$

We have seen this explicitly realized in the example of the last section. There, the possifold $S$ corresponded to excitations of the lowest modes of the soliton given by the homogeneous condensate. The manifold $T$ would correspond to excitations of the higher modes. 

Quite often we will identify a possifold $S$ with its embeddings describing an ensemble of suited soliton excitations. We will denote this identification as the \emph{soliton-possifold correspondence}. 

In the example of the last section, we derived the dual theory $(S,\Theta',H')$ associated to the possifold $S$ by carefully keeping track of the summation over weight factors. If a general possifold $(S,\Theta',H')$ in a given theory possesses a global canonical chart, we can write its partition function as in \eqref{8}

\begin{equation} \label{54}
Z_{\text{poss}}=\int \mathcal{D}q(t) e^{\frac{i}{\hbar} S_{\text{poss}}[q(t)]}
\end{equation}

for some degrees of freedom $q(t)$ and a possifold action $S_{\text{poss}}[q(t)].$ 

It would be nice if for some given soliton in a theory, \eqref{54} could be determined on general grounds. Precisely this happens in a well-known example - the most famous case of $D_3$-branes. 

\subsection{Sensitivity}
\label{Kapitel 4.3}

We end with an outlook on an example that appears to be interesting. A very obvious example of a soliton would be black holes in gravity. We translate the well-known properties of black holes in our developed framework. As we will see, it suggests the gravitational path integral to be rather intricate.

We want to consider four-dimensional Einstein-gravity. So far, we did not mention how to specify general relativity in terms of $(\Gamma,\Theta,H)$ as in chapter \ref{Kapitel 2}. We will here also not do so. The general connection with an action was extensively discussed in chapter \ref{Kapitel 2.2}. We will here only use those elements relevant for the argumentation to follow.

According to \eqref{10}, the space $\Gamma$ consists of solutions to the field equations. The potential $\Theta$ induces via \eqref{3} a volume form $\operatorname{Vol}$ on $\Gamma.$ \\
\\
In general, the volume form $\operatorname{Vol}$ on $\Gamma$ induces a notion of volume of subsets. For example, it associates a length to a path in $\Gamma.$ This implies the notion of \emph{distance} among states in $\Gamma.$ The physical meaning of these notions becomes apparent from the form of $\operatorname{Vol}$ expressed in local canonical coordinates as in \eqref{3}. Roughly, one can think of the volume as measuring the number of quantum mechanically resolvable states sectioned by a respective subset in $\Gamma.$\\
\\
In our present example, this is all the information about $\Theta$ we will use. The charge associated to the Hamiltonian is the mass of the spacetime. 

Gravity contains black holes and in light of the last section there is a natural possifold associated to them. It is the one capturing their microstates in accordance with their entropy. One might object whether such a possifold exists. It is surely required by consistency. We will take it simply as an assumption here. We could refer to string theory which associated particular black holes explicit possifolds. Therefore, we do not see any reason or principle to object the assumption made. 

We now add as matter content a number of $N$ scalar fields

\begin{equation*}
\Delta \mathcal{L} = \frac{1}{2} \sum_{i=1}^N \partial_\mu \phi_i \partial^\mu \phi_i
\end{equation*}

which we couple minimally to gravity. To make the later point clear, we imagine taking $N$ to be very large and finally taking $N \to \infty.$ We now construct randomly solutions to the field equations. We restrict them to be asymptotically flat and localized within a characteristic length $L.$ Additionally, we fix the charges associated to asymptotic Poincar\'{e}-invariance to vanish except the mass which we take to be $M.$ As mentioned, this equivalently produces an appropriate ensamble of states in $\Gamma.$ In statistical physics such an ensemble is known as a Liouville fluid. 

We proceed our thought experiment by monitoring the constructed fluid in $\Gamma$ for a given length scale $L.$ We continue by slowly decreasing $L \to R \sim GM$ towards the gravitational radius of $M.$ What is the behavior of the fluid observed in $\Gamma$?

For large $L \gg R,$ the distribution of the fluid is random. However, as $L$ approaches the gravitational radius $R,$ the fluid is by the black hole no-hair theorems forced to describe solutions diffeomorphic to the Schwarzschild black hole. For $L=R,$ the fluid is hence confined in a region describing the Schwarzschild black hole microstates with a volume of the order $\sim e^S$ with $S \sim \frac{R^2}{L^2_P}$ being the black hole entropy. Consequently, the enormous amount of solutions for large $L$ (remember that $N$ can be made arbitrarily large) has to flow into a finite volume in $\Gamma$ as $L$ is decreased towards $R.$ But this can happen in a smooth way only if the states are \emph{compressed} as $L$ is decreased. This is not a contradiction to Liouville's theorem as the process of changing $L$ does not happen in time and hence does not follow the volume preserving time evolution. In summary, we learn that the states of the fluid \emph{converge} towards one another as the length scale is decreased. The end point of this convergence is achieved when the decrease of $L$ reaches black hole formation and the fluid is confined inside an appropriate black hole possifold.

The argument suggests this behavior to be a glimpse of a general gravitational \emph{convergence theorem}. 

Indeed, the following altered thought experiment supports this anticipation. Consider again a Liouville fluid of states in $\Gamma.$ We require the corresponding solutions to the field equations be asymptotically flat and confined within a characteristic length scale $R.$ Additionally, we insist the charges associated to asymptotic Poincar\'{e}-invariance to not differ much from one another. Despite this innocent restrictions the fluid can be chosen arbitrarily and we denote its volume by $V.$ For each state in the fluid we can construct a new solution by adding an appropriate infalling mass shell at spatial infinity. The purpose of this shell is to collapse with the original field configuration to form a black hole of size $R.$ The solutions with added mass shells at spatial infinity lead to a new fluid in $\Gamma$ with volume $\sim V.$ The volume does not change significantly because the added mass shells of the various states do not differ much from one another. Following the time evolution of the new fluid, it flows by construction into the possifold of a black hole of size $R.$ Since this time the fluid follows the flow generated by $H,$ its volume stays constantly at $\sim V$ according to Liouville's theorem. And since the fluid enters completely the black hole's possifold, it is confined by a volume of order $\sim e^S$ with $S \sim \frac{R^2}{L^2_P}$ being the black hole entropy. Consequently, we obtain the bound

\begin{equation} \label{55}
\ln (V) \lesssim \frac{R^2}{L_P^2}.
\end{equation}

What is the statement of \eqref{55}? It looks like the Bekenstein bound but it says more. It proposes ``how'' gravity realizes the Bekenstein bound and the proposal is by convergence. Independently of how many scalar fields are used to create the solutions forming the Liouville fluid of volume $V,$ \eqref{55} puts a bound on this volume. This suggests that the states converge that much towards one another with respect to the measure $\operatorname{Vol}$ in order for \eqref{55} to be satisfied.

This suggestion seems to be in accordance with some conjectured properties quantum gravity should satisfy. A review on these properties is in \cite{Palti:2019pca} containing original references. Convergence provides a rationale for the distance conjecture. The bound on species appears as a special case of \eqref{55}. To make this clear, note that the statement of being able to ``resolve $N_s$ species at the scale $R$'' means that for each species, there is an associated independent field excitation. In the situation described before \eqref{55}, it therefore requires the existence of an appropriate Liouville fluid of volume $V \sim e^{N_s}.$ \eqref{55} is then the bound on species

\begin{equation} \label{56}
N_s \lesssim \frac{R^2}{L_P^2}.
\end{equation}

However, the presented arguments in this section have been heuristic and do not provide a proof of the convergence theorem. In fact, we even did not provide a rigorous statement of the convergence theorem within the structure $(\Gamma,\Theta,H).$ To do so, is beyond the scope of this small outlook section. We postpone a further more detailed analysis to a different place. 

Nevertheless, the suggested exponential suppression of the volume of shorter length scales may provide a remarkable possibility. Measurable quantities may have a weaker sensitivity on short length scales than one might expect from a usual perturbative field theory expansion. 

\section*{Acknowledgements}
We thank Alexander Gußmann for many discussions on this and other topics in physics.

\appendix

\section{Derivation of \eqref{2}}
\label{Appendix A}

In the quantum theory of chapter \ref{Kapitel 2}, we are typically interested in the probability amplitude

\begin{equation} \label{A1}
\left\langle q_b \right| e^{-\frac{i}{\hbar}HT} \left| q_a \right\rangle
\end{equation}

for the system to evolve from the state $\left| q_a \right\rangle$ to $\left| q_b \right\rangle$ within a time $T.$ To get an expression for \eqref{A1}, we split the time interval $T=n \varepsilon$ into a large number $n$ of small intervals of size $\varepsilon$ and decompose the time evolution 

\begin{equation} \label{A2}
e^{-\frac{i}{\hbar}HT} = \underbrace{e^{-\frac{i}{\hbar}H\varepsilon} \cdots e^{-\frac{i}{\hbar}H\varepsilon}}_{n\text{-times}}.
\end{equation}

For $k=1,\ldots,n-1$ we sandwich the completeness relation

\begin{equation} \label{A3}
1 = \left( \prod_{i=1}^N {\int {dq_k^i}} \right) \left| q_k \right\rangle \left\langle q_k \right|
\end{equation}

between the exponentials in \eqref{A2}. Defining $q_0 := q_a$ and $q_n := q_b$ we are left to evaluate expressions of the form

\begin{equation} \label{A4}
\left\langle q_{k+1} \right| e^{-\frac{i}{\hbar}H\varepsilon} \left| q_k \right\rangle \xrightarrow[\varepsilon \to 0]{}
\left\langle q_{k+1} \right|  1 -\frac{i}{\hbar}H\varepsilon + \ldots \left|  q_k \right\rangle  
\end{equation}  

for $k=0,\ldots,n-1.$  Suppose, between the bra and ket in \eqref{A4} we would have an operator $f(q)$ that is a function of the position operators only.  We then have

\begin{equation} \label{A5}
\begin{split}
\left\langle q_{k+1} \right| f(q) \left| q_k \right\rangle
&= f \left( q_k \right) \prod_{i=1}^N {\delta \left( q_k^i-q_{k+1}^i \right)} \\
&= f\left( \frac{q_{k+1}+q_k}{2} \right) \left( \prod_{i=1}^N {\int \frac{dp_k^i}{2\pi \hbar}} \right) e^{\frac{i}{\hbar} \sum_{i=1}^N {p_k^i \left( q_{k+1}^i-q_k^i\right)}}.
\end{split}
\end{equation}

In case $f$ is only momentum dependent, we can write

\begin{equation} \label{A6}
\begin{split}
\left\langle q_{k+1} \right| f(p) \left| q_k \right\rangle
&= \left( \prod_{i=1}^N {\int \frac{dp_k^i}{2\pi \hbar}} \right) f ( p_k ) \langle q_{k+1} |  p_k \rangle \langle p_k | q_k \rangle \\
&=  \left( \prod_{i=1}^N {\int \frac{dp_k^i}{2\pi \hbar}} \right) f ( p_k ) e^{\frac{i}{\hbar} \sum_{i=1}^N {p_k^i \left( q_{k+1}^i-q_k^i\right)}}.
\end{split}
\end{equation}

The above formulae are summarized in the expression

\begin{equation} \label{A7}
\left\langle q_{k+1} \right| f(q,p) \left| q_k \right\rangle
= \left( \prod_{i=1}^N {\int \frac{dp_k^i}{2\pi \hbar}} \right) f  \left( \frac{q_{k+1}+q_k}{2},p_k \right) e^{\frac{i}{\hbar} \sum_{i=1}^N {p_k^i \left( q_{k+1}^i-q_k^i\right)}}
\end{equation}

which is, however, not valid in general. What are the conditions for its validity?

Let $O: \Gamma \longrightarrow \mathbb{R}$ be a classical observable. We define a proper symmetrization map $S$ that associates to $O$ an Weyl-ordered quantum mechanical operator $O^S$ in the Schrödinger-picture. We define $S$ by insisting monoms to be mapped as

\begin{equation} \label{A8}
q^\alpha f(p) \xrightarrow[]{S} \sum_{k=0}^\infty {\binom{\alpha}{k} \left( \frac{q}{2} \right)^k f(p) \left( \frac{q}{2} \right)^{\alpha - k}}
\end{equation}

for $\alpha \in \mathbb{R}$ and an arbitrary function $f$ of the momentum. Note that the left-hand side in \eqref{A8} denotes a classical observable $\Gamma \longrightarrow \mathbb{R}$ while the right-hand side is a quantum mechanical operator. Obviously, this operator is Weyl-ordered and by the argumentation as in \eqref{A5} and \eqref{A6} together with the generalized binomial formula, the validity of \eqref{A7} is implied for operators as in \eqref{A8}. $S$ is then uniquely determined by requiring linearity and continuity. We conclude \eqref{A7} to hold in the form

\begin{equation} \label{A9}
\left\langle q_{k+1} \right| O^S (q,p) \left| q_k \right\rangle
= \left( \prod_{i=1}^N {\int \frac{dp_k^i}{2\pi \hbar}} \right) O  \left( \frac{q_{k+1}+q_k}{2},p_k \right) e^{\frac{i}{\hbar} \sum_{i=1}^N {p_k^i \left( q_{k+1}^i-q_k^i\right)}}
\end{equation}

for an arbitrary observable $O=O(q,p).$ As is customary, we will call $O^S$ the quantum mechanical operator associated to the classical observable $O.$ Moreover, we will in general omit the superscript $S.$ By using the canonical commutators, each operator can be written in Weyl-ordered form. Hence, the map $S$ is surjective. Without loss of generality, we therefore assume the Hamilton-operator in \eqref{A1} to be Weyl-ordered and being the image of the Hamilton-function $H=H(q,p)$ under $S.$ 

With these observations, we can evaluate \eqref{A4} as

\begin{equation} \label{A10}
\left\langle q_{k+1} \right| e^{-\frac{i}{\hbar}H\varepsilon} \left| q_k \right\rangle
= \left( \prod_{i=1}^N {\int \frac{dp_k^i}{2\pi \hbar}} \right) e^{-\frac{i}{\hbar} \varepsilon  H \left( \frac{q_{k+1}+q_k}{2},p_k \right) } e^{\frac{i}{\hbar} \sum_{i=1}^N {p_k^i \left( q_{k+1}^i-q_k^i\right)}}.
\end{equation}     

Collecting everything together, this implies finally

\begin{equation} \label{A11}
\begin{split}
&\left\langle q_b \right| e^{-\frac{i}{\hbar}HT} \left| q_a \right\rangle\\
&= \left( \prod_{k=1}^{n-1} \prod_{i=1}^N \int dq_k^i \left\langle q_{k+1} \right| e^{-\frac{i}{\hbar}H\varepsilon} \left| q_k \right\rangle \right) \left\langle q_{1} \right| e^{-\frac{i}{\hbar}H\varepsilon} \left| q_0 \right\rangle \\
&= \prod_k \prod_{i=1}^N \int dq_k^i \int \frac{dp_k^i}{2\pi \hbar} e^{\frac{i}{\hbar} \sum_{k=0}^{n-1} \left( \sum_{i=1}^N p_k^i \left( q_{k+1}^i-q_k^i\right) - \varepsilon  H \left( \frac{q_{k+1}+q_k}{2},p_k \right) \right)}.
\end{split}
\end{equation}

\eqref{A11} is the standard path integral expression for the solution of the Schrödinger equation satisfied by \eqref{A1}. We rederived it here in order to make clear the notation used in what follows. We can think of $k$ as labeling the time-slice with time $t=k \varepsilon.$ At the slice $t$ in our time-foliation we have the measure $\mathcal{D} q(t) = \prod_{i=1}^N dq^i.$ The integration over $\mathcal{D}q(t)$ occurs in every slice where the position is not fixed. Furthermore, we have the measure $\mathcal{D}p(t) = \prod_{i=1}^N \frac{dp^i}{2\pi \hbar}.$ Integration over $\mathcal{D}p(t)$ occurs for each transition from a time-slice $t$ to the next one. 

With this notation we can write \eqref{A11} more compactly as

\begin{equation} \label{A12}
\left\langle q_b \right| e^{-\frac{i}{\hbar}HT} \left| q_a \right\rangle
= \int_{q(t=0)=q_a}^{q(t=T)=q_b} \mathcal{D}q(t) \mathcal{D}p(t) e^{\frac{i}{\hbar} \int_0^T dt \left(\sum_i p^i \dot{q}^i -H(q,p) \right)}.
\end{equation}

The limits on the integrals mean that on the particular time-slices the position is fixed implying that there is no need of integration. \\
\\
Having a functional integral expression for the probability amplitude enables us to find an expression for the correlation functions. Let $O_i : \Gamma \longrightarrow \mathbb{R}$ for $i=1,2$ be two observables. Let further $t_i \in (-T,T).$ Without loss of generality, we assume the ordering $t_2 > t_1.$ We then consider the expression

\begin{equation} \label{A13}
\begin{split}
\int_{q(-T)=q_a}^{q(T)=q_b} \mathcal{D}q(t) \mathcal{D}p(t) &O_1(q(t_1),p(t_1)) O_2(q(t_2),p(t_2)) \cdot \\
&e^{\frac{i}{\hbar} \int_{-T}^T dt \left(\sum_i p^i(t) \dot{q}^i(t) -H(q(t),p(t)) \right)}.
\end{split}
\end{equation} 

In the following, we omit writing the limits for $q(\pm T)$ and use the abbreviation $O_i(q(t_i),p(t_i))=O_i(t_i)=O_i$ to simplify notation. Let be $\delta > 0$ small enough such that we can split \eqref{A13} as

\begin{equation} \label{A14}
\begin{split}
&\int_{t>t_2+\delta} \mathcal{D}q(t) \mathcal{D}p(t) e^{\frac{i}{\hbar} \int_{t_2+\delta}^T dt \left(\sum_i p^i \dot{q}^i -H \right)} \times \\
&\int_{t_2+\delta>t>t_2-\delta} \mathcal{D}q(t) \mathcal{D}p(t) O_2(t_2) e^{\frac{i}{\hbar} \int_{t_2-\delta}^{t_2+\delta} dt \left(\sum_i p^i \dot{q}^i -H \right)} \times \\
&\int_{t_2-\delta>t>t_1+\delta} \mathcal{D}q(t) \mathcal{D}p(t) e^{\frac{i}{\hbar} \int_{t_1+\delta}^{t_2-\delta} dt \left(\sum_i p^i \dot{q}^i -H \right)} \times \\
&\int_{t_1+\delta>t>t_1-\delta} \mathcal{D}q(t) \mathcal{D}p(t) O_1(t_1) e^{\frac{i}{\hbar} \int_{t_1-\delta}^{t_1+\delta} dt \left(\sum_i p^i \dot{q}^i -H \right)} \times \\
&\int_{t<t_1-\delta} \mathcal{D}q(t) \mathcal{D}p(t) e^{\frac{i}{\hbar} \int_{-T}^{t_1-\delta} dt \left(\sum_i p^i \dot{q}^i -H \right)}.
\end{split}
\end{equation}  

For the parts without insertions of $O_i$ in the integrands, we can apply \eqref{A12}. The others are rewritten using \eqref{A9} and the smoothness of the path $(q(t),p(t))$ in time. This smoothness is a consequence of the construction of the functional integral. We obtain for \eqref{A14}

\begin{equation} \label{A15}
\begin{split}
&\int \mathcal{D}q(t_2+\delta) \left\langle q_b \right| e^{-\frac{i}{\hbar}H(T-(t_2+\delta))} \left| q(t_2+\delta) \right\rangle \times \\
&\int \mathcal{D}q(t_2-\delta) \left\langle q(t_2+\delta) \right| \left(O_2 e^{-\frac{i}{\hbar}H \cdot 2\delta} \right)^S \left| q(t_2-\delta) \right\rangle \times \\
&\int \mathcal{D}q(t_1+\delta) \left\langle q(t_2-\delta) \right| e^{-\frac{i}{\hbar}H(t_2-t_1-2\delta)} \left| q(t_1+\delta) \right\rangle \times \\
&\int \mathcal{D}q(t_1-\delta) \left\langle q(t_1+\delta) \right| \left( O_1 e^{-\frac{i}{\hbar}H \cdot 2\delta} \right)^S \left| q(t_1-\delta) \right\rangle \times \\
&\left\langle q(t_1-\delta) \right| e^{-\frac{i}{\hbar}H(t_1-\delta+T)} \left| q_a \right\rangle.
\end{split}
\end{equation}

Using the completeness relation and taking the limit $\delta \to 0$ the expression becomes

\begin{equation} \label{A16}
\begin{split}
&\left\langle q_b \right| e^{-\frac{i}{\hbar}HT} e^{-\frac{i}{\hbar}H(-t_2)} O_2^S e^{-\frac{i}{\hbar}Ht_2}
e^{-\frac{i}{\hbar}H(-t_1)} O_1^S e^{-\frac{i}{\hbar}Ht_1} e^{-\frac{i}{\hbar}HT} \left| q_a \right\rangle \\
&= \left\langle q_b \right| e^{-\frac{i}{\hbar}HT} T \left( O_2^H(t_2) O_1^H(t_1) \right) e^{-\frac{i}{\hbar}HT} \left| q_a \right\rangle.
\end{split}
\end{equation}

In the last equation, we used the relation $O^H(t) = e^{\frac{i}{\hbar}Ht} O^S e^{-\frac{i}{\hbar}Ht}$ among Heisenberg-picture and Schrödinger-picture operators. The correct time-order is an immediate consequence of the derivation. 

The remaining part proceeds as in the derivation of the standard case. We can expand $H=E_0 |\Omega \rangle \langle \Omega| + \sum_n E_n |n \rangle \langle n|$ in a complete orthonormal basis with $| \Omega \rangle$ being the lowest eigenvalue state. As already noted, the path $(q(t),p(t))$ in the functional integration is smooth and can hence be analytically continued in $t.$ This allows taking the limit $T \to \infty (1-i\varepsilon)$ for a small $\varepsilon > 0.$ The least suppressed part in \eqref{A16} then becomes

\begin{equation} \label{A17}
\langle q_b | \Omega \rangle e^{-2\frac{i}{\hbar} E_0 \infty(1-i\varepsilon)} \langle \Omega | q_a \rangle 
\langle \Omega | TO_2^H (t_2) O_1^H(t_1) |\Omega \rangle .
\end{equation}

We can now add an integration over $q_a$ and $q_b$ to include the position integration at $\pm T$ in \eqref{A13}. Had we performed the same derivation with no operator insertions in the integrand, we would have obtained \eqref{A17} without the last factor. Dividing both results thus yields precisely \eqref{2} for the case of two operator insertions. The general case is derived analogously.


\begin{thebibliography}{10}


\bibitem{Averin:2018owq}
  A.~Averin,
  ``Schwarzschild/CFT from soft black hole hair?,''
  JHEP {\bf 1901} (2019) 092
  doi:10.1007/JHEP01(2019)092
  [arXiv:1808.09923 [hep-th]].
	
\bibitem{Averin:2019zsi}
A.~Averin,
``Entropy counting from a Schwarzschild/CFT correspondence and soft hair,''
Phys. Rev. D \textbf{101} (2020) no.4, 046024
doi:10.1103/PhysRevD.101.046024
[arXiv:1910.08061 [hep-th]].

\bibitem{Peskin:1995ev}
M.~E.~Peskin and D.~V.~Schroeder,
``An Introduction to quantum field theory,''
Addison-Wesley, 1995,
ISBN 978-0-201-50397-5


		
\bibitem{Seraj:2016cym}
  A.~Seraj,
  ``Conserved charges, surface degrees of freedom, and black hole entropy,''
  arXiv:1603.02442 [hep-th].
	
\bibitem{Kanamoto_2003}
Kanamoto, Rina and Saito, Hiroki and Ueda, Masahito,
``Quantum phase transition in one-dimensional Bose-Einstein condensates with attractive interactions,''
   10.1103/physreva.67.013608,
   Physical Review A,
   2003
	
\bibitem{Dalfovo:1999zz}
F.~Dalfovo, S.~Giorgini, L.~P.~Pitaevskii and S.~Stringari,
``Theory of Bose-Einstein condensation in trapped gases,''
Rev. Mod. Phys. \textbf{71} (1999), 463-512
doi:10.1103/RevModPhys.71.463
[arXiv:cond-mat/9806038 [cond-mat]].

\bibitem{Palti:2019pca}
E.~Palti,
``The Swampland: Introduction and Review,''
Fortsch. Phys. \textbf{67} (2019) no.6, 1900037
doi:10.1002/prop.201900037
[arXiv:1903.06239 [hep-th]].

			


\end{thebibliography}
\end{document}